\theoremstyle{definition}
\newtheorem{ex}{Example}
\newtheorem{prop}{Proposition}
\newcommand{\bal}{\begin{align*}}
\newcommand{\td}{\tilde}
\newcommand{\eal}{\end{align*}}
\renewcommand{\th}{\theta}
\newcommand{\tx}{\text}
\newcommand{\sra}{\stackrel{\leftrightarrow}}
\newcommand{\pa}{\partial}
\newcommand{\be}{\begin{eqnarray*}}
\newcommand{\na}{\nabla}
\newcommand{\mk}{\mathfrak{P}}
\newcommand{\De}{\Delta}
\newcommand{\si}{\sigma}
\newcommand{\sm}{\setminus}
\newcommand{\bDe}{\bar \Delta}
\newcommand{\lam}{\lambda}
\newcommand{\bHH}{\bar {\cal H}}
\newcommand{\HH}{{\cal H}}
\newcommand{\C}{\mathbb{C}}
\newcommand{\R}{\mathbb{R}}
\newcommand{\N}{\mathbb{N}}
\newcommand{\FF}{{\cal F}}
\newcommand{\GG}{{\cal G}}
\newcommand{\Si}{\Sigma}
\newcommand{\PP}{{\cal P}}
\newcommand{\X}{{\bf X}}
\newcommand{\1}{{\bm 1}}
\newcommand{\la}{\langle}
\newcommand{\de}{\delta}
\newcommand{\ra}{\rangle}
\newcommand{\x}{{\bf x}}
\newcommand{\y}{{\bf y}}
\begin{document}
\begin{large}
\title{\bf {Regular Hamiltonians for non-relativistic interacting quantum field theories}}
\end{large}
\author{Bruno Galvan \footnote{Electronic address: b.galvan@virgilio.it}\\ \small via Melta 16, 38121 Trento, Italy.}
\maketitle
\begin{abstract}
In the context of non-relativistic quantum field theory, a method is proposed for multiplying field operators at the same spatial point and obtaining regular (i.e. rigorously defined) interaction terms for the Hamiltonian. The basic idea is to modify the Lebesgue measure of configuration space of many particles by adding singular measures over the subspaces of configuration space in which the positions of two or more particles coincide.
\end{abstract}

\section{The problem and the proposed solution}
It is well known that quantum field theory (QFT) \cite{wein}, while provides very good empirical predictions (e.g., the value of the anomalous magnetic moment of the electron), is not formulated in a rigorous mathematical way. Actually, in the case of free particles, the theory is formulated in a rigorous way, and a regular (rigorously defined) Hamiltonian\footnote{More in general, a regular representation of the Poincar\'e group.} can be defined on the Fock space of the particles. The departure from mathematical rigor begins when interaction terms are added to the free Hamiltonian. By applying a standard quantization scheme, these terms are constructed by multiplying free fields at the same spatial point, but this construction does not give rise to regular operators. In this introductory section, this problem is illustrated in a precise way, even though in the simplified context of a non-relativistic QFT, and a possible solution is sketched for it. At the end of the section, the plan of the paper is presented.

\subsection{The problem} \label{problem}

Let $\HH:= L^2(\X, d\x)$, where $\X:= \R^3$ and $d\x$ is the Lebesgue measure on $\X$. The boson Fock space is $\FF_s(\HH)= \oplus_{i=0}^\infty \HH_s^{\otimes n}$, where $\HH_s^{\otimes n}$ is the symmetric subspace of $\HH^{\otimes n}$. For $\x \in \X$, the annihilation field operator $\phi_-(\x)$ act as follows from $\HH^{\otimes n}_s$ to $\HH^{\otimes(n-1)}_s$:
\begin{equation}
[\phi_-(\x)\Psi_n](\x_1, \ldots, \x_{n-1}):= \sqrt{n} \Psi_n(\x_1, \ldots, \x_{n-1}, \x)
\end{equation}
The operator $\phi_-(\x)$ is a regular unbounded operator defined on a suitable dense domain of $\HH_s^{\otimes n}$, for example $S C_0^\infty(\X^n)$, where $S$ is the projector on the symmetric subspace of $\HH^{\otimes n}$. Since $n$ is generic, the above equation defines the action of $\phi(\x)$ on a suitable dense domain of $\FF_0$, which is the (dense) subspace of $\FF_s(\HH)$ composed by the vectors with a finite number of non null components. The adjoint operator $\phi_-(\x)^*=:\phi_+(\x)$ acts as follows from $\HH^{\otimes n}_s$ to $\HH^{\otimes(n+1)}_s$:
\begin{equation}
[\phi_+(\x) \Psi_n](\x_1, \ldots, \x_{n+1}) = \frac{1}{\sqrt{n+1}} \sum_{i=1}^{n+1} \delta(\x - \x_i) \Psi_n(\x_1, \ldots, \hat \x_i, \ldots , \x_{n+1}).
\end{equation}
Due to the presence of the $\delta$ function, $\phi_+(\x)$ is not a regular operator, but it defines a sesquilinear form, and as a sesquilinear form it is the adjoint of $\phi_-(\x)$. However $\phi_+(\x)$ has two good properties: if $f \in \HH$, then the expression:
\begin{equation}
\phi_+(f) := \int f(\x) \phi_+(\x) d\x
\end{equation}
defines a regular operator, whose restriction to $\HH^{\otimes n}$ is a bounded operator with norm $\sqrt{n}\, ||f||$. Moreover, what is more important, also the expression
\begin{equation}
\int \phi_+(\x) \phi_-(\x) d\x 
\end{equation}
defines a regular operator, namely the number operator  $N \Psi_n = n \Psi_n$.

Let us consider now the $k$-th power of the annihilation field operator $\phi_-(\x)^k=:\phi^k_-(\x)$. It acts as follows from $\HH^{\otimes n}$ to $\HH^{\otimes (n-k)}$:
\begin{equation}
[\phi_-(\x)^k \Psi_n](\x_1, \ldots, \x_{n-k}) = \sqrt{\frac{n!}{(n-k)!}} \Psi_n(\x_1, \ldots, \x_{n-k}, \underbrace{\x, \ldots, \x}_\textrm{k times}).
\end{equation}
Also this operator is a regular operator defined on a suitable dense domain $\FF_0$. Its adjoint is
\begin{equation}
\phi^k_+(\x):=[\phi_-(\x)^k]^* =S \psi^k_+(\x),
\end{equation}
where $\psi^k_+(\x)$ acts as follows from $\HH^{\otimes n}_s$ to $\HH^{n+k}$:
\begin{equation}
[\psi^k_+(\x) \Psi_n](\x_1, \ldots, \x_{n+k}) := \sqrt{\frac{(n+k)!}{n!}} \delta(\x - \x_{n+1}) \cdots  \delta(\x-\x_{n+k})\Psi_n(\x_1, \ldots, \x_n).
\end{equation}
As before, $\phi_+^k(\x)$ is not an operator, and it is the adjoint of $\phi_-^k(\x)$ only as a sesquilinear form. However $\phi^k_+(\x)$ has no longer the good properties of $\phi_+(\x)$, namely, for generic $k, h$, the expressions 
\[
\int f(\x) \phi^k_+(\x)d\x \text{ and } \int \phi^k_+(\x)\phi^h_-(\x)d\x
\]
are no longer regular operators. Consider for example the expression
\[
\int \psi^2_+(\x)\phi_-(\x)d\x
\]
applied to a vector $\Psi_1 \in \HH$. We have that:
\begin{align*}
& \left [\left (\int \psi^2_+(\x)\phi_-(\x)d\x \right) \Psi_1 \right ](\x_1, \x_2) =  \\
& = \sqrt{2} \int \delta(\x- \x_1) \delta(\x- \x_2) \Psi_1(\x) d\x = \sqrt{2} \delta(\x_2- \x_1) \Psi_1(\x_1),
\end{align*}
which is not a regular vector. So $\int \psi^2_+(\x)\phi_-(\x)d\x$ is not a regular operator, and therefore not even $\int \phi^2_+(\x)\phi_-(\x)d\x= \int S\psi^2_+(\x)\phi_-(\x)d\x$ is a regular operator.

By quantizing classical Lagrangians one typically obtains interaction terms for the Hamiltonian containing expressions of the type $\int \phi^k_+(\x)\phi^h_-(\x)d\x$ (also in more complex contexts, namely with spin, etc...). This is basically the reason why one cannot derive a regular Hamiltonian in the context of Lagrangian QFT.

\subsection{The proposed solution} \label{solution}
The usual solution to this problem is to introduce a spatial cutoff, i.e., roughly speaking, to replace $\phi_-(\x)$ with $\phi_-(f)$, where $f \in \HH$ is strongly peaked around $\x$. In this way $\phi_+(f)$, its powers and the relative adjoint operators are regular. The problems is that a relativistic covariant theory cannot be constructed in this way.

The alternative solution proposed in this paper consists of assuming that the Hilbert space of two or more particles is not the tensor product of the Hilbert spaces of the single particles, because ``something happens'' when the particles are exactly ad the same spatial point. To be concrete, in the simple case of two particles, it is assumed that the Hilbert space of the pair is not the tensor product $\HH^{\otimes 2} = L^2(\X^2, d\x_1d\x_2)$, but rather is what will be referred to as the {\it coincidence product}:
\begin{equation} \label{10}
\HH^{\odot 2} := L^2(\X^2, [1 + \delta(\x_1 - \x_2)] d\x_1d\x_2).
\end{equation}
This means that the coincidence set $C:= \{ (\x_1, \x_2) \in \X^2: \x_1 = \x_2 \}$ has a singular measure with respect to the configuration space $\X^2$. The scalar product in $\HH^{\odot 2}$ has the following form:
\begin{equation} \label{scal}
\la \Phi |\Psi \ra = \int_{\X^2} \Phi^* \Psi [1 + \delta(\x_1 - \x_2) ]d\x_1d\x_2 =
\int_{\X^2} \Phi^* \Psi d\x_1d\x_2 + \int_\X \Phi^*(\x, \x) \Psi(\x, \x) d\x.
\end{equation}
From the above equality it is straightforward to realize that
\begin{equation} \label{dirsu}
\HH^{\odot 2} \equiv L^2(\X^2, d\x_1 d\x_2 ) \oplus L^2(\X, d\x).
\end{equation}
The two addend of the direct sum (\ref{dirsu}) will be referred to as the components of the coincidence product. 

Let us show in a simple case how the coincidence product solves the problem described in the previous section. The annihilation field operator $\phi^2_-(\x): \HH^{\odot 2} \to \C$ is defined as in the previous case:
\begin{equation}
\phi^2_-(\x)\Psi_2 = \sqrt{2} \Psi_2(\x, \x).
\end{equation}
However, in the case of a coincidence product, the adjoint $\phi^2_+(\x)$ is different, namely it is
\begin{equation} \label{ann}
[\phi^2_+(\x) c](\x_1, \x_2) = \sqrt{2} \, c \, \1_C(\x_1, \x_2) \delta(\x - \x_1),
\end{equation}
where $c \in \C$ and $\1_C$ is the characteristic function of $C$. In fact
\begin{align*}
& \la \phi^2_+(\x) c | \Psi_2 \ra = \sqrt{2} c^* \int_{\X^2}  \1_C(\x_1, \x_2) \delta(\x - \x_1) \Psi_2 (\x_1, \x_2) d\x_1 d\x_2 + \\
& + \sqrt{2} \, c^* \int_{\X^2}  \1_C(\x_1, \x_2) \delta(\x- \x_1)  \Psi_2 (\x_1, \x_2) \delta(\x_1 - \x_2) d\x_1 d\x_2 = \\
& = \sqrt{2} \, c^* \int_\X  \1_C(\x, \x_2) \Psi_2 (\x, \x_2) d\x_2 + \sqrt{2} c^* \Psi_2(\x, \x) = 
\sqrt{2} \, c^* \Psi_2(\x, \x) = \la c | \phi^2_-(\x)\Psi_2 \ra.
\end{align*}
The definition (\ref{ann}) differs from the usual definition of creation field operator for the presence of the function $\1_C$. The consequence is that now an expression of the type 
\[
\int_\x \phi^2_+(\x) \phi_-(\x) d\x
\]
defines a regular operator. In fact, for $\Psi_1 \in \HH$, we have: 
\begin{align*}
& \left [ \left (\int \phi^2_+(\x) \phi_-(\x) d\x \right ) \Psi_1 \right ] (\x_1, \x_2) = \\
& = \sqrt{2}  \int \1_C(\x_1, \x_2)  \delta(\x - \x_1) \Psi_1(\x) d\x = \sqrt{2} \Psi_1(\x_1) \1_C(\x_1, \x_2).
\end{align*}
The fact that $\Psi_2(\x_1, \x_2):= \sqrt{2} \Psi_1(\x_1) \1_C(\x_1, \x_2)$ is a regular vector of $\HH^{\odot 2}$ can be seen as follows:
\begin{align*}
& ||\Psi_2||^2 = 2 \int_{\X^2} |\Psi_1(\x_1)|^2 \1_C(\x_1, \x_2) d\x_1 d\x_2 + 2 \int_{\X^2} |\Psi_1(\x_1)|^2 \1_C(\x_1, \x_2) \delta(\x_1 - \x_2) d\x_1 d\x_2 =\\
& 2 \int_\X |\Psi_1(\x_2)|^2 \1_C(\x_2, \x_2) d\x_2 = 2 ||\Psi_1||^2.
\end{align*}

This result turns out to be general, and the fields operators $\phi^k_\pm(\x)$ turn out to possess nice properties, very similar to those of the basic fields operator $\phi_\pm(\x)$.

\subsection{The free Hamiltonian}

The replacement of the tensor product with the coincidence product requires redefining the free Hamiltonian, which in a non-relativistic context is based on the Laplacian. The problem is therefore to define a non trivial self-adjoint Laplacian on the coincidence product space, where ``not trivial'' means that it must determine a time evolution which mixes the various components the coincidence product, i.e., the addend of the direct sum (\ref{dirsu}). Also this problem is addressed in the paper, and such a kind of Laplacian is obtained by choosing a suitable domain of definition. In fact, it is well known from functional analysis that very different self-adjoint operators can be derived from the same differential operator if different domains are chosen.

\subsection{Plan of the paper}

The subjects sketched in the previous two subsections are developed in this paper for a system of non-relativistic indistinguishable bosons, and therefore the possibility that this approach could lead to a regular relativistic covariant QFT is not verified.

In Section \ref{some} a preliminary notation is introduced; in Section \ref{coincidence} the notion of coincidence product is defined in a general way; in Section \ref{field} the field operators and their powers are defined, and their properties are stated. Due to the length, the proof of these properties has been moved to the Appendix. In Section \ref{free} the problem of the definition of the Laplacian on the coincidence product space is addressed, and the Laplacian for the case of two particles is developed into the details. Section \ref{conclusion} concludes the paper by summarizing it and by presenting some open questions which require further investigation. In spite of a rather wide research in the literature, I have not found an approach which is similar to the one proposed in this paper; this is the reason for the shortness of the bibliography.

\section{Some preliminary notations} \label{some}
In order to define the coincidence product and the field operators it is useful to generalize the notion of configuration space from the usual Cartesian product $\X^n$ to the set $\X^N$, where $N$ is a generic non empty finite set of $\N$. Recall that the set $\X^N$ is composed by the functions $x_N:N \to \X$. In a less rigorous but more intuitive manner, one can say that, while an element of $\X^n$ is the $n$-uple 
\[
(\x_1, \ldots, \x_n),
\]
an element of $\X^N$, where $N= \{n_1 < \cdots < n_n\}$, is the $n$-uple
\[
(\x_{n_1}, \ldots, \x_{n_n}).
\]
In this way $\X^n$ can be identified with $\X^{\{1, \ldots, n\}}$. In this section some notation and rules relative to this formalism are introduced.

Finite subsets of $\N$ will be denoted by upper case letters $N, M, I, J, K, H$. The cardinality of $N$ will be denoted by $|N|$, the $i$-th element of $N$, in its natural order, will be denoted by $n_i$, i.e., by using the lower case of the letter denoting the set. An element of $\X^N$ will be denoted by $x_N$. According to the above intuitive representation of the elements of $\X^N$, if $i \in N$, the symbol $\x_i$ will be also used to denote $x_N(i)$.

If $I$ and $J$ are non intersecting finite subsets of $\N$, then any unordered pair $(x_I, x_J)$ univocally defines an element $x_{I \cup J} \in \X^{I \cup J}$, and vice-versa. So the sets $\X^I \times \X^J$, $\X^J \times \X^I$, and $\X^{I \cup J}$. can be identified. If $I \subseteq N$, the projector $\pi_I:\X^N \to \X^I$ is naturally defined as follows: $\pi_I x_N := x_N \big |_I$.

If $|N| = |M|$ there is a natural bijection $\theta: X^N \to X^M$, namely $[\th x_M](m_i ):= x_N(n_i)$. As a consequence, an element of $\X^M$ can be used as the argument of a function $\Psi_N:\X^N \to \C$, by defining $\Psi_N(\x_M) := \Psi_N(\th^{-1} x_N)$. In a more intuitive notation:
\[
\Psi_N(x_M) = \Psi_N(\x_{m_1}, \ldots, \x_{m_{|M|}}).
\]
In the same way, if $I \cap J = \emptyset$ and $|I \cup J| = |N|$, we can write unambiguously $\Psi_N(x_I, x_J)$ or $\Psi_N(x_J, x_I)$ to denote $\Psi_N(x_{I \cup J})$, where $x_{I \cup J}$ is the element of $\X^{I \cup J}$ associated with the pair $(x_I, x_J)$. We will also write $x_J = \x$ if $x_J (i) = \x$ for all $i \in J$, and $\Psi_N(x_I, x_J = \x)$ in place of $\Psi_N(x_I, x_J) \big |_{x_J = \x}$.

Eventually, as a general rule, if $N = \{1, \ldots, n\}$, the symbol $n$ will replace the symbol $N$ when possible. For example, we will write $\X^n, x_n$ and $\Psi_n$ in place of $\X^N, x_N$ and $\Psi_N$.

\section{The coincidence product} \label{coincidence}

In this section the coincidence product of two Hilbert spaces defined by equation (\ref{10}) will be generalized to $n$ Hilbert spaces, and the Fock space composed by these spaces will be introduced.

Let us start by defining the coincidence planes of $\X^N$. Given $I \subseteq N$, define the following subspace of $\X^N$:
\begin{equation}
C_I = \{ x_N \in \X^N: \forall i, j \in N, \; \; i , j \in I \Rightarrow \x_i = \x_j\}.
\end{equation}
Moreover, given a partition $\PP = \{I_1, \ldots, I_p\}$ of $N$, define the subspace:
\begin{equation} \label{cap}
C_\PP = C_{I_1} \cap \ldots \cap C_{I_p}.
\end{equation}
The subspaces $C_I$ and $C_\PP$ will be referred to as the {\it coincidence planes} corresponding to the subset $I$ and to the partition $\PP$, respectively\footnote{In scattering theory these subspaces are usually referred to as the collision planes.}. If we defines $F_I:= \pi_I C_I \subseteq \X^I$, we can write:
\begin{equation} \label{21}
C_\PP = F_{I_1} \times \cdots \times F_{I_p}.
\end{equation}
Note that $\PP_1 > \PP_2 \Leftrightarrow C_{\PP_1} \subset C_{\PP_2}$, and $C_{\PP_1} \cap C_{\PP_2} = C_{\PP_1 \vee \PP_2}$. 

\begin{ex}
Let us consider the two extreme partitions of $N$, namely 
\[
\PP_{\min} := \{\{n_1\}, \ldots, \{n_{|N|}\}\} \tx{ and } \PP_{\max} := \{\{n_1, \ldots , n_{|N|}\}\}.
\]
Then $C_{\PP_{\min}} = \X^N$ and $C_{\PP_{\max}}= \{ x_N \in \X^N: \x_i = \x_j \tx{ for all } i, j \in N\}$.
\end{ex}

Let us consider now the measures. The natural correspondence $\X^n \leftrightarrow \X^N$ (see Section \ref{some}) induces on $\X^N$ the Lebesgue measure, that will be denoted by $dx_N$. If $I \cap J=\emptyset$ and $I \cup J=N$, one can write $dx_N = dx_I dx_J$. Since $\X^N = \X^{\{n_1\}} \times \cdots \times \X^{\{n_{|N|}\}}$, one can also write $dx_N = dx_{\{n_1\}} \cdots dx_{\{n_{|N|}\}}$. Here too, in order to simplify the notation, the previous expression will be replaced by the simplified expression $dx_N = d\x_{n_1} \cdots d\x_{n_{|N|}}$.

There is also a natural correspondence between $F_I$ and $\X$, which allows us to endowe $F_I$ with the Lebesgue measure, that will be denoted by $d\nu_I$. The symbol $d\mu_I$ will denote the measure on $\X^I$ which equals $d\nu_I$ on $F_I$ and is null on $\X^I \sm F_I$. The measure $d\mu_I$ can be expressed as follows:
\begin{equation}
d\mu_I = \delta(\x_{i_1} - \x_{i_2}) \cdots \delta(\x_{i_{|I|-1}} - \x_{i_{|I|}}) dx_I.
\end{equation}
Due to equation (\ref{21}), $C_\PP$ is naturally endowed with the measure
\begin{equation}
d\nu_\PP:= d\nu_{I_1} \cdots d\nu_{I_p},
\end{equation}
and $d\mu_\PP$ will denote the measure on $\X^N$ which equals $d\nu_\PP$ on $C_\PP$ and is null on $\X^N \sm C_\PP$. We also have
\begin{equation}
d\mu_\PP = d\mu_{I_1} \cdots d\mu_{I_p}.
\end{equation}
Let us define therefore on $\X^N$ the {\it coincidence} measure:
\begin{equation}
d\mu^{\odot N} := \sum_{\PP \in \mk_N} d\mu_\PP,
\end{equation}
where $\mk_N$ is the set of all the partitions of $N$. Correspondently, if $\HH := L^2(\X, d\x)$, define the {\it coincidence product} 
\begin{equation}
\HH^{\odot N}:= L^2(\X^N, d\mu^{\odot N}).
\end{equation}
According to a previously mentioned convention, in the case in which $N=\{1, \ldots, n\}$, we write $n$ in place of $N$ at the exponent of $\mu, \X$, and $\HH$. So:
\begin{equation}
\HH^{\odot n}:= L^2(\X^n, d\mu^{\odot n}).
\end{equation}
The scalar product in $\HH^{\odot N}$ reads:
\begin{equation}
\la \Phi| \Psi \ra = \int_{\X^N} \Phi^* \Psi d\mu^{\odot N} = \sum_{\PP \in \mk_N} \int_{\X^N} \Phi^* \Psi d\mu_\PP.
\end{equation}
The following proposition generalize the equations (\ref{scal}) and (\ref{dirsu}):
\begin{prop} \hspace{3mm}
\begin{itemize}
\item[1.] $\la \Phi| \Psi \ra = \sum_{\PP \in \mk_N} \int_{C_\PP} \Phi^* \Psi d\nu_\PP.$
\item[2.] $L^2(\X^N, d\mu^{\odot N}) \equiv  \bigoplus_{\PP \in \mk_N} L^2(C_\PP, d\nu_\PP)$.
\end{itemize}
\end{prop}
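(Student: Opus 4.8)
The plan is to dispatch the two statements separately: the first is a reformulation of the definitions, while the second amounts to checking that the measures $d\mu_\PP$ are mutually singular and then invoking the corresponding splitting of the $L^2$ space. For part 1 I would simply unfold the construction of $d\mu_\PP$: by definition it is the measure on $\X^N$ that coincides with $d\nu_\PP$ on $C_\PP$ and is null on $\X^N\sm C_\PP$, so $\int_{\X^N} g\,d\mu_\PP=\int_{C_\PP} g\,d\nu_\PP$ for every $d\mu^{\odot N}$-integrable $g$. Taking $g=\Phi^*\Psi$ in the already displayed identity $\la\Phi|\Psi\ra=\sum_{\PP\in\mk_N}\int_{\X^N}\Phi^*\Psi\,d\mu_\PP$, and using that $\mk_N$ is finite, gives part 1 immediately.

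For part 2 I would exhibit the unitary explicitly. Consider the restriction map
\[
T\Psi := \big(\Psi|_{C_\PP}\big)_{\PP\in\mk_N}\,,\qquad T:\ L^2(\X^N,d\mu^{\odot N})\ \longrightarrow\ \bigoplus_{\PP\in\mk_N} L^2(C_\PP,d\nu_\PP).
\]
It is well defined on $L^2$-classes because a $d\mu^{\odot N}$-null set is null for each summand $d\mu_\PP$ (a set is null for a finite sum of positive measures iff it is null for every term), and $d\mu_\PP$ restricted to $C_\PP$ equals $d\nu_\PP$; it is isometric by part 1 applied to $\|\Psi\|^2$, since $\|T\Psi\|^2=\sum_\PP\int_{C_\PP}|\Psi|^2 d\nu_\PP=\la\Psi|\Psi\ra$. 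Thus everything reduces to the surjectivity of $T$, which is precisely the assertion that the measures $d\mu_\PP$, $\PP\in\mk_N$, are pairwise mutually singular.

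To prove this I would stratify $\X^N$ by coincidence pattern. For $x_N\in\X^N$ let $\PP(x_N)$ be the partition of $N$ whose blocks are the maximal subsets on which $x_N$ is constant, and set $D_\PP:=\{x_N:\PP(x_N)=\PP\}$; these are Borel, pairwise disjoint, $\X^N=\bigcup_\PP D_\PP$, and $C_\PP=\bigcup_{\PP'\ge\PP} D_{\PP'}$, with the partitions ordered so that $\PP'\ge\PP\Leftrightarrow C_{\PP'}\subseteq C_\PP$. Under the identification $C_\PP=F_{I_1}\times\cdots\times F_{I_p}\cong\X^{|\PP|}$ (each $F_{I_j}\cong\X$), every $C_{\PP'}$ with $\PP'>\PP$ becomes a finite union of diagonals of $\X^{|\PP|}$, hence a $d\nu_\PP$-null subset of $C_\PP$; so $d\nu_\PP$ is concentrated on $D_\PP$, and since the $D_\PP$ are disjoint the $d\mu_\PP$ are indeed pairwise mutually singular. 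The same codimension count shows that on $D_\PP$ the measure $d\mu^{\odot N}=\sum_{\PP'}d\mu_{\PP'}$ reduces to $d\mu_\PP|_{D_\PP}=d\nu_\PP|_{D_\PP}$: the terms with $\PP'\not\le\PP$ do not meet $D_\PP$, and those with $\PP'<\PP$ meet it only along a lower-dimensional diagonal of $C_{\PP'}\cong\X^{|\PP'|}$. Given $(\psi_\PP)_\PP$ I would then set $\Psi:=\sum_{\PP\in\mk_N}\psi_\PP\,\1_{D_\PP}$ (reading $\psi_\PP$ as $0$ off $C_\PP$); by the foregoing $\int_{\X^N}|\Psi|^2 d\mu^{\odot N}=\sum_\PP\int_{D_\PP}|\psi_\PP|^2 d\nu_\PP=\sum_\PP\|\psi_\PP\|^2<\infty$, so $\Psi\in L^2(\X^N,d\mu^{\odot N})$, and $\Psi|_{C_\PP}=\psi_\PP$ because $C_\PP\sm D_\PP$ is $d\nu_\PP$-null; hence $T\Psi=(\psi_\PP)_\PP$ and $T$ is onto.

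I anticipate that the only real work is measure-theoretic bookkeeping: verifying that distinct coincidence planes overlap only on $d\nu$-null sets — equivalently, that inside $C_\PP\cong\X^{|\PP|}$ the finer strata $D_{\PP'}$ form a Lebesgue-null union of diagonals — and keeping straight, for each $\PP$, exactly which partitions $\PP'$ satisfy $\PP'\ge\PP$ (so $C_{\PP'}\subseteq C_\PP$) as against $\PP'\le\PP$. Once the stratification $\X^N=\bigcup_\PP D_\PP$ and the identifications $F_I\cong\X$, $C_\PP\cong\X^{|\PP|}$ are in place, the rest is routine.
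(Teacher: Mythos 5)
Your proof is correct and takes essentially the same route as the paper: your strata $D_\PP$ (points with exact coincidence pattern $\PP$) coincide with the paper's sets $B_\PP = C_\PP \sm \bigcup_{\PP'>\PP}C_{\PP'}$, and the mutual-singularity argument via the Lebesgue-nullity of the finer diagonals inside $C_\PP\cong\X^{|\PP|}$ is the same. You merely make the unitary $T$ and its surjectivity explicit where the paper says ``one easily deduces,'' and you rightly note that part 1 is immediate from the definition of $d\mu_\PP$.
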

\begin{proof} Point 1. Let us introduce the following subsets of $\X^N$:
\[
B_{\PP} := C_{\PP} \sm \bigcup_{\PP' > \PP} C_{\PP'}.
\]
One can prove that the sets $\{B_{\PP}\}_{\PP \in \mk_N}$ form a partition of $\X^N$ \cite{der}. So:
\[
\la \Phi | \Psi \ra = \sum_{\PP \in \mk_N} \sum_{\PP' \in \mk_N} \int_{B_{\PP'}} \Phi^* \Psi d\mu_{\PP}.
\]
But
\begin{equation} \label{intp1}
\int_{B_{\PP'}} \Phi^* \Psi d\mu_{\PP} = \int_{B_{\PP'} \cap C_{\PP}} \Phi^* \Psi d\mu_{\PP},
\end{equation}
because $\mu_{\PP}$ is null in the set $\X^N \sm C_{\PP}$. Now, if $\PP \neq \PP'$ there are two cases: (i) $\PP < \PP'$, and (ii) $\PP \vee \PP' > \PP'$. In the first case $C_{\PP'}$ is a proper subspace of $C_{\PP}$, and therefore it has null $\mu_{\PP}$ measure. Since $B_{\PP'} \subseteq C_{\PP'}$, the integral (\ref{intp1}) is null. In the second case
\[
B_{\PP'} \cap C_{\PP} = B_{\PP'} \cap C_{\PP} \cap C_{\PP'} = B_{\PP'} \cap C_{\PP \vee \PP'} = \emptyset,
\]
so , again, the integral (\ref{intp1}) is null. In conclusion, the integral (\ref{intp1}) is not null only if $\PP = \PP'$. But
\[
\int_{B_\PP} \Phi^* \Psi d\mu_{\PP}  = \int_{C_\PP} \Phi^* \Psi d\mu_{\PP},
\]
because the set $C_{\PP} \sm B_{\PP}=\bigcup_{\PP' > \PP} C_{\PP'}$ is the union of proper subspaces of $C_{\PP}$, and therefore it has null $\mu_\PP$ measure. Eventually, obviously:
\[
\int_{C_\PP} \Phi^* \Psi d\mu_{\PP}=\int_{C_\PP} \Phi^* \Psi d\nu_{\PP}.
\]
\begin{center}
* * *
\end{center}
Point 2. According to the above reasoning one can write
\[
\la \Phi| \Psi \ra = \sum_{\PP \in \mk_N} \int_{B_\PP} \Phi^* \Psi d\nu_\PP.
\]
Since the sets $\{B_\PP\}_{\PP \in \mk_N}$ from a partition of $\X^N$, one easily deduce that
\[
L^2(\X^N, d\mu^{\odot N}) \equiv  \bigoplus_{\PP \in \mk_N} L^2(B_\PP, d\nu_\PP|_{B_\PP}).
\]
But, since $C_{\PP} \sm B_{\PP}$ has null $d\nu_{\PP}$ measure, 
\[
L^2(B_\PP, d\nu_\PP|_{B_\PP}) \equiv L^2(C_\PP, d\nu_\PP).
\]
\end{proof}
The addends of the direct sum in point 2 of the above proposition will be referred to as the {\it components} of the coincidence product.

\begin{ex}
Let us verify if $\HH^{\odot 2}$ actually corresponds to the definition (\ref{10}). The set $\{1, 2\}$ has the following subsets: $I_1 := \{1\}$, $I_2 := \{2\}$ and $I_3 := \{1, 2\}$. It is easy to see that $C_{I_1}= C_{I_2} = \X^2$ and $C_{I_3} = \{(\x_1, \x_2) \in \X^2 : \x_1 = \x_2\}$, which is the coincidence set $C$ of section \ref{solution}. The set $\{1, 2\}$ has two partitions, namely $\PP_1= \{I_1, I_2\}$ and $\PP_2 = \{I_3\}$. So $C_{\PP_1}= C_{I_1} \cap C_{I_2} = \X^2$,  and $C_{\PP_2}=C_{I_3}$. The measures: $d\mu_{I_1}= d\x_1$ and $d\mu_{I_2}= d\x_2$, so that $d\mu_{\PP_1} = d\x_1 d\x_2$, and $d\mu_{I_3}= d\mu_{\PP_2}= \de(\x_1 - \x_2) d\x_1 d\x_2$. Eventually, 
\begin{equation}
d\mu^{\odot 2} = d\mu_{\PP_1} + d\mu_{\PP_2} = [1 + \de(\x_1 - \x_2)] d\x_1 d\x_2,
\end{equation}
which corresponds to the definition (\ref{10}).
\end{ex}

\subsection{Permutations}

Let $\Sigma_N$ denote the set of the permutations of the set $N$. If $\si \in \Sigma_N$, $I \subseteq N$ and $\PP \in \mk_N$, the expressions $\si(I)$ and $\si(\PP)$ are defined in an obvious way. A permutation $\si \in \Sigma_N$ naturally induces a bijection $f_\si: \X^N \to \X^N$, defined as follows:  $f_\si(x_N) := x_N \cdot \si^{-1}$ (recall that $x_N$ is a map from $N$ to $\X$). It is straightforward to prove that $f_{\si_1} \cdot f_{\si_2} = f_{\si_1 \cdot \si_2}$ and $f_{\si^{-1}}= f^{-1}_\si$. If $\Delta \subseteq \X^N$, the expression $f_\si(\Delta)$ is defined in an obvious way.

\begin{prop} Let $\si \in \Sigma_n$ and $\De \subseteq \X^N$; then
\begin{itemize}
\item[1.] $f_\si(C_I) = C_{\si(I)}$ and $f_\si(C_\PP) = C_{\si(\PP)}$;
\item[2.] $\mu_\PP(\De) = \mu_{\si(\PP)}[f_\si(\De)]$;
\item[3.] $\mu^{\odot N}[f_\si(\Delta)]=\mu^{\odot N}(\Delta)$.
\end{itemize}
\end{prop}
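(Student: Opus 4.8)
The plan is to prove the three points in order, each resting on the previous one, with the single substantive ingredient being the elementary observation that $f_\sigma$ is nothing but a relabelling of the $|N|$ Cartesian factors of $\X^N$, hence an orthogonal linear bijection; in particular it preserves the Lebesgue measure $dx_N$ and, being a homeomorphism, carries Borel sets to Borel sets in both directions (so the set-functions below are all defined on the relevant $\sigma$-algebras).

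For Point 1 I would simply unwind the definitions. Writing $[f_\sigma(x_N)](k)=x_N(\sigma^{-1}k)$, the condition $f_\sigma(x_N)\in C_{\sigma(I)}$ reads $x_N(\sigma^{-1}k)=x_N(\sigma^{-1}l)$ for all $k,l\in\sigma(I)$; substituting $i=\sigma^{-1}k$, $j=\sigma^{-1}l$, and noting that $(i,j)$ then ranges over $I\times I$, this is exactly the defining condition of $C_I$. Hence $f_\sigma(C_I)=C_{\sigma(I)}$. Since $f_\sigma$ is injective it commutes with intersections, so from (\ref{cap}) one gets $f_\sigma(C_\PP)=\bigcap_j f_\sigma(C_{I_j})=\bigcap_j C_{\sigma(I_j)}=C_{\sigma(\PP)}$.

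For Point 2 the plan is to upgrade the $dx_N$-invariance of $f_\sigma$ to the statement that $f_\sigma$ restricts to a measure isomorphism $(C_\PP,\nu_\PP)\to(C_{\sigma(\PP)},\nu_{\sigma(\PP)})$. Using (\ref{21}) I would parametrise $C_\PP$ by $\X^{|\PP|}$, recording for each block of $\PP$ its common coordinate value, and likewise parametrise $C_{\sigma(\PP)}$; in these coordinates $f_\sigma$ acts simply as the permutation of the $|\PP|$ factors sending the slot of $I_j$ to the slot of $\sigma(I_j)$, and a permutation of factors carries the product Lebesgue measure $\nu_\PP=\nu_{I_1}\cdots\nu_{I_p}$ to $\nu_{\sigma(I_1)}\cdots\nu_{\sigma(I_p)}=\nu_{\sigma(\PP)}$. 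Granting this, for any (Borel) $\Delta$ one computes $\mu_\PP(\Delta)=\nu_\PP(\Delta\cap C_\PP)=\nu_{\sigma(\PP)}\bigl(f_\sigma(\Delta)\cap C_{\sigma(\PP)}\bigr)=\mu_{\sigma(\PP)}\bigl(f_\sigma(\Delta)\bigr)$, the middle equality using Point 1 and injectivity of $f_\sigma$ to rewrite $f_\sigma(\Delta\cap C_\PP)$ as $f_\sigma(\Delta)\cap C_{\sigma(\PP)}$. (The same conclusion can be reached directly from $d\mu_\PP=d\mu_{I_1}\cdots d\mu_{I_p}$ by observing that the distribution $\delta(\x_{i_1}-\x_{i_2})\cdots\delta(\x_{i_{|I|-1}}-\x_{i_{|I|}})$ depends only on the set $I$, not on the ordering of its elements, so applying $f_\sigma$ turns the factor for $I_j$ into the factor for $\sigma(I_j)$.)

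For Point 3 I would sum Point 2 over $\mk_N$ and reindex: replacing $\PP$ by $\sigma^{-1}(\PP)$ in Point 2 gives $\mu_\PP\bigl(f_\sigma(\Delta)\bigr)=\mu_{\sigma^{-1}(\PP)}(\Delta)$, whence $\mu^{\odot N}\bigl(f_\sigma(\Delta)\bigr)=\sum_{\PP\in\mk_N}\mu_\PP\bigl(f_\sigma(\Delta)\bigr)=\sum_{\PP\in\mk_N}\mu_{\sigma^{-1}(\PP)}(\Delta)=\sum_{\PP'\in\mk_N}\mu_{\PP'}(\Delta)=\mu^{\odot N}(\Delta)$, since $\PP\mapsto\sigma^{-1}(\PP)$ is a bijection of $\mk_N$. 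I expect the only genuinely delicate step to be the measure-isomorphism claim inside Point 2 — that is, checking that the ``natural correspondences'' of Section \ref{coincidence} used to equip the $F_I$ and $C_\PP$ with Lebesgue measures are compatible with the coordinate permutation $f_\sigma$, so that on $C_\PP$, where the coordinates within a block already coincide, $f_\sigma$ genuinely acts as a bare permutation of blocks with no Jacobian or re-ordering correction. Everything else is routine bookkeeping.
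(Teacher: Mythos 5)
Your proofs of points 1 and 3 follow essentially the same route as the paper's: unwinding the definition of $f_\si$ with the substitution $i\mapsto\si^{-1}(k)$ for point 1, and reindexing the sum over $\mk_N$ via the bijection $\PP\mapsto\si^{-1}(\PP)$ for point 3. The one difference is point 2, whose proof the paper simply omits; your argument for it — that in block coordinates $C_\PP\cong\X^{|\PP|}$ the map $f_\si$ is a bare permutation of factors, which preserves the product Lebesgue measure $\nu_{I_1}\cdots\nu_{I_p}$ — is sound and correctly identifies the only point needing care, namely that the identifications $F_I\cong\X$ (sending a constant configuration to its common value) are independent of the ordering of $I$, so no Jacobian correction arises.
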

\begin{proof} Proof of 1.
\begin{align*}
f_\si(C_I) & = \{ f_\si(x_N) \in \X^N: i , j \in I \Rightarrow x_N(i) = x_N(j)\}= \\  
& = \{ x_N \in \X^N: i , j \in I \Rightarrow [f^{-1}_\si x_N](i) = [f^{-1}_\si x_N](j)\}= \\ 
& = \{ x_N \in \X^N: i , j \in I \Rightarrow x_N[\si(i)] = x_N[\si(j)]\}= \\  
& = \{ x_N \in \X^N: \si^{-1}(i), \si^{-1}(j) \in I \Rightarrow x_N(i) = x_N(j)\}= \\  
& = \{ x_N \in \X^N: i, j \in \si(I) \Rightarrow x_N(i) = x_N(j)\}= C_{\si(I)}.
\end{align*}
The transformation rule for the sets $C_\PP$ derives from the above result and from equation (\ref{cap}).

Proof of 2. Omitted.

Proof of 3. 
\[
\mu^{\odot N}[f_\si (\De)] = \sum_{\PP \in \mk_N} \mu_\PP[f_\si(\Delta)] = \sum_{\PP \in \mk_N} \mu_{\si^{-1}(\PP)} (\Delta)= \mu^{\odot N}(\De).
\]
\end{proof}
The third point of the above proposition states that the coincidence measure is invariant under permutations.

Let us define the linear operator $U_\si: \HH^{\odot N} \to \HH^{\odot N}$ as follows: $[U_\si \Psi](x_N) := \Psi [f^{-1}_\si(x_N)]$. It is easy to prove that $U_{\si_1} U_{\si_2} = U_{\si_1 \cdot \si_2}$ and $U_{\si^{-1}}= U^{-1}_\si$. Moreover, since the measure $\mu^{\odot N}$ is invariant under $f_\si$, the operator $U_\si$ is unitary, that is $U_\si^{-1}=U^*_\si$.

A vector $\Psi_N \in \HH^{\odot N}$ is said to be symmetric if $U_\si \Psi_N = \Psi$ for any $\si \in \Sigma_N$. Let $\HH^{\odot N}_s$ denote the subspace of $\HH^{\odot N}$ composed by the symmetric vectors. The orthogonal projector on $\HH^{\odot N}_s$ is defined in the usual way:
\begin{equation} \label{S}
S \Psi_N := \frac{1}{|N|!} \sum_{\sigma \in \Sigma_N} U_\si \Psi_N.
\end{equation}
It is easy to prove that the definition (\ref{S}) is correct also for the space $\HH^{\odot N}$, that is: (i) $S\Psi_N$ is symmetric and (ii) $S$ is an orthogonal projector.

Eventually let us define Fock space $\GG_s(\HH)$ based on the coincidence product spaces:
\begin{equation}
\GG_s(\HH):= \oplus_{n=0}^\infty \HH_s^{\odot n},
\end{equation}
with the usual convention that $\HH_s^{\odot 0} = \C$. The space $\GG_s(\HH)$ will be referred to as the {\it coincidence Fock space}.


\section{Field operators} \label{field}

In this section the annihilation and creation field operators will be defined on the coincidence Fock space. The good properties of these operators envisaged in section \ref{solution} will be proved in great generality. In particular, it will be proved that powers of annihilation operators and their adjoint, the creation field operators, can be composed to form regular interaction operators.

The action of the field operators will be defined on a generic space $\HH^{\odot N}_s$. This implies that their action is also defined on $\HH^{\odot n}_s$ and on the subspace of $\GG_s(\HH)$ composed by the vectors with a finite number of non-null components. This subspace is dense in $\GG_s(\HH)$, and will be denoted by $\GG_0$.

In this section $n,k,h$ are positive integer, $f, g$ vectors of $\HH$, $\Psi_N, \Phi_N$ and $\Psi_n, \Phi_n$ are vectors of $\HH^{\odot N}_s$ and $\HH^{\odot n}_s$, respectively. The conventional definition $\HH^{\odot \emptyset}:= \C$ will also be adopted. In the definition of the creation operators the important function $\1_I^N: \X^N \to \{0,1\}$ will be used; this function is the characteristic function of the set 
\begin{equation}
B_I^N = \{ x_N \in \X^N: \forall i, j \in N, \; \; i , j \in I \Leftrightarrow \x_i = \x_j\},
\end{equation}
where $I \subseteq N$. In the appendix various properties of $\1_I^N$ will be proved.

\subsection{Annihilation operators}

There are very few changes in the definition and in the properties of these operators when defined on the coincidence product rather than on the tensor product. 

For $k \leq n$ define the {\it $k$-annihilation field operator} $\phi_-^k(\x):\HH_s^{\odot n} \to \HH_s^{\odot (n-k)}$ as follows:
\begin{equation}
[\phi^k_-(\x) \Psi_n](\x_1, \ldots, \x_{n-k}) = \sqrt{\frac{n!}{(n-k)!}} \, \Psi_n(\x_1, \ldots, \x_{n-k}, \underbrace{\x, \ldots, \x}_\textrm{k times}).
\end{equation}
For $k>n$, define $\phi^k_-(\x) \Psi_n := 0$. $\phi^1_-(\x)$ is the usual annihilation field operator, and it will be denoted by $\phi_-(\x)$. Note that $\phi^k_-(\x)= [\phi_-(\x)]^k$. The operator $\phi^k_-(\x)$ is a regular operator defined on a suitable dense domain of $\HH_s^{\odot n}$, for example $S C_0^\infty(\X^n)$. 

Given $f \in \HH$, let us define the {\it k-annihilation operator} $\phi_-^k(f) $:
\begin{equation}
\phi_-^k(f) := \int f^*(\x) \phi_-^k(\x) d\x.
\end{equation}
Here too $\phi_-(f) := \phi^1_-(f)$ corresponds to the usual annihilation operator on $\HH_s^{\otimes n}$, where it is bounded, and therefore it is defined on the whole space $\HH_s^{\otimes n}$. However, for $k >1$ $\phi^k_-(f)$ is no longer bounded on $\HH_s^{\otimes n}$, though it can be defined on a dense domain. On the contrary, on $\HH_s^{\odot n}$ the operator $\phi_-^k(f)$ is bounded for any $k$, as it will be proved below. This is probably the only difference between annihilations operators defined on tensor product spaces and coincidence product spaces.

The operator $\phi_-^k(\x)$ (and therefore also $\phi_-^k(f)$) can be easily extended to operate between the generalized spaces $\HH^{\odot N} \to \HH^{\odot M}$, where $|M| = |N| - k$ (as usual, for $k > |N|$ the operator is null). We define:
\begin{equation} \label{defm}
[\phi_-^k(\x)\Psi_N](x_M) := \sqrt{\frac{|N|!}{(|N|-k)!}} \Psi_N(x_M, x_K = \x),
\end{equation}
where $K$ is any set such that $|K| = k$ and $M \cap K = \emptyset$. Note that the definition (\ref{defm}) does not depend on the specific choice of $K$.
\subsection{Creation operators}

It is more convenient to directly define the {\it k-creation field operator} $\phi^k_+(\x)$ between the generalized spaces $\HH_s^{\odot N} \to \HH_s^{\odot M}$, where now $|M| = |N| + k$. Let us define:
\begin{equation} \label{pp}
[\phi^k_+(\x) \Psi_N](x_M) = 
k! \sqrt{\frac{|N|!}{(|N|+k)!}} \sum_{K \subseteq M: |K| = k} \de(\x -\x_{i_1}) \1^M_K(x_M) \Psi_N(\pi_{M \sm K}x_M).
\end{equation}
It will be proved that $\phi_+^k(\x)= \phi_-^k(\x)^*$ as a sesquilinear form on $\GG_0$. In the case in which $k=1$, $N= \{1, \ldots, n\}$, and $M=\{1, \ldots, n+1\}$, equation (\ref{pp}) reads:
\begin{equation}
[\phi_+(\x) \Psi_n](x_{n+1})= \frac{1}{\sqrt{n+1}}\sum_{i = 1}^{n+1} \de(\x - \x_i) \1^{n+1}_{\{i\}}(x_{n+1})  \Psi_n(\x_1, \ldots, \hat \x_i, \ldots, \x_{n+1}),
\end{equation}
which differs from the usual creation operator for the presence of the factor $\1^M_{\{i\}}(x_{n+1})$. 

The {\it $k$-creation operator} $\phi^k_+(f)$ is defined by:
\begin{equation}
\phi^k_+(f) = \int f(\x) \phi^k_+(\x) d\x.
\end{equation}
It will be proved that $\phi^k_+(f)$ is a well defined operator on $\GG_0$, and that the operator 
\begin{equation}
\phi^k(f) := \phi^k_+(f) + \phi^k_-(f)
\end{equation}
is symmetric on $\GG_0$. Recall from section \ref{problem} that, when defined on tensor product spaces, $\phi^k_+(f)$ only exists as a sesquilinear form.

\subsection{Interaction operators}
The most important consequence of operating with the coincidence product spaces is that one can define the following {\it interaction} operators:
\begin{equation}
H^h_k := \int \phi_+^h(\x) \phi_-^k(\x) d\x.
\end{equation}
For $k \leq |N|$, the explicit action of $H^h_k$ from $\HH^{\odot N}_s$ to $\HH^{\odot M}_s$, where $|M| = |N| - k + h$, is the following:
\begin{equation} \label{espact}
[H^h_k \Psi_N](x_M) = h! \sqrt{\frac{|N|!}{(|N|-k+h)!}} \sum_{H \subseteq M: |H|=h} \1^M_H(x_M)  \Psi_N(\pi_{M \sm H} x_M, x_K = \x_{h_1}),
\end{equation}
where $K$ is any subset such that $|K|=k$ and $M \cap K= \emptyset$. If $k >|N|$ then $H^h_k= 0$. The above equation will be proved in the Appendix. It will be proved moreover that $H^h_k$ is a regular operator on the dense domain $\GG_0$, and that $H^k_h + H^h_k$ is symmetric (and arguably essentially self-adjoint) on $\GG_0$. The operators of the type $H^k_h + H^h_k$ are of course the natural candidates for composing the interaction Hamiltonian.

\subsection{Properties of the field operators} \label{properties}

The main properties of the above defined operators are listed in the following 

\begin{prop} \hspace{1mm}
\begin{itemize}
\item[1.] $\la \phi_+^k(\x) \Phi | \Psi \ra = \la  \Phi | \phi_-^k(\x) \Psi \ra$ for any $\Phi, \Psi \in \GG_0$.
\item[2.] The operators $\phi^k_\pm(f)$ are well defined on $\GG_0$, and the operator $\phi^k(f)$ is symmetric on $\GG_0$.
\item[3.] Commutation rules:
\begin{align*}
& [\phi^k_-(f), \phi^h_+(g)]\Psi =  \delta_{kh} h! \la f| g \ra \Psi; \\
&[\phi^k_-(f), \phi^h_-(g)]\Psi = [\phi^k_+(f), \phi^h_+(g)]\Psi = 0
\end{align*}
for any $\Psi \in \GG_0$.
\item[4.] The equation (\ref{espact}) holds true, the operators $H^k_h$ are well defined on $\GG_0$, and the operators $H_h^k + H_k^h$ are symmetric on $\GG_0$.
\end{itemize}
\end{prop}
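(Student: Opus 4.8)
The four statements are best handled in the order $1\Rightarrow 2\Rightarrow 4\Rightarrow 3$, since point~1 is the adjoint identity everything else rests on and the commutation rules of point~3 are the most delicate. The common technical input, to be collected in the Appendix, is a description of the functions $\1^M_K$: for each $\PP\in\mk_M$ the restriction of $\1^M_K$ to the coincidence plane $C_\PP$ is $\nu_\PP$-a.e.\ equal to $1$ if $K$ is one of the blocks of $\PP$, and $\nu_\PP$-a.e.\ equal to $0$ otherwise; together with this, the behaviour of $\1^M_K$ under permutations and under fixing $\x$ at a coordinate. These are the lemmas that make every ``only the matching term survives'' step below go through.

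For point~1 I would expand the scalar product on $\HH^{\odot M}_s$ in the partition form $\la\Phi|\Psi\ra=\sum_{\PP\in\mk_M}\int_{C_\PP}\Phi^*\Psi\,d\nu_\PP$ established in Section~\ref{coincidence} and insert \eqref{pp}. By the $\1^M_K$ lemma the resulting double sum --- over the $k$-subsets $K\subseteq M$ in \eqref{pp} and over $\PP\in\mk_M$ --- reduces to the pairs with $K\in\PP$; for such a pair, writing $\PP=\{K\}\cup{\cal Q}$ with ${\cal Q}\in\mk_{M\sm K}$ and $C_\PP=F_K\times C_{\cal Q}$, the factor $\de(\x-\x_{i_1})$ collapses the $K$-cluster to $\x$ and what remains is $\int_{C_{\cal Q}}\Phi_N^*\,\Psi_M(\cdot,x_K=\x)\,d\nu_{\cal Q}$. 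Summing over ${\cal Q}$ reassembles $\la\Phi_N|\Psi_M(\cdot,x_K=\x)\ra_{\HH^{\odot(M\sm K)}}$, which by symmetry of $\Psi_M$ does not depend on $K$; the $\binom{|M|}{k}$ choices of $K$ contribute equally, and since $k!\binom{|M|}{k}=|M|!/|N|!$ the normalization in \eqref{pp} is exactly what is needed to produce $\la\Phi_N|\phi^k_-(\x)\Psi_M\ra$. The case $k=2$ worked out in Section~\ref{solution} is the template.

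The key point of point~2 --- and the reason for introducing the coincidence product at all --- is that, unlike on tensor powers, $\phi^k_-(f)$ is bounded on each $\HH^{\odot n}_s$: for $\PP\in\mk_L$ with $|L|=n-k$, Cauchy--Schwarz in the $\x$-integral bounds $\int_{C_\PP}|[\phi^k_-(f)\Psi_n]|^2\,d\nu_\PP$ by $\tfrac{n!}{(n-k)!}\|f\|^2\int_{C_{\PP'}}|\Psi_n|^2\,d\nu_{\PP'}$ with $\PP'\in\mk_n$ obtained from $\PP$ by adjoining the $k$ suppressed coordinates as one extra block; since $\PP\mapsto\PP'$ is injective, summing over $\PP$ gives $\|\phi^k_-(f)\Psi_n\|\le\sqrt{n!/(n-k)!}\,\|f\|\,\|\Psi_n\|$. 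So $\phi^k_-(f)$ is everywhere defined and bounded between the relevant coincidence spaces, its adjoint is bounded and agrees on $\GG_0$ with $\int f(\x)\phi^k_+(\x)\,d\x$ by the $f$-weighted form of point~1, so $\phi^k_\pm(f)$ send $\GG_0$ into $\GG_s(\HH)$; symmetry of $\phi^k(f)$ on $\GG_0$ is then just the identity $\la\phi^k_+(f)\Phi|\Psi\ra=\la\Phi|\phi^k_-(f)\Psi\ra$ of point~1 and its conjugate. For point~4 I would read off \eqref{espact} by composing \eqref{defm} and \eqref{pp} and doing the $\x$-integral (the single $\de$ in \eqref{pp} removes it and pins the value $\x$, hence the inserted cluster, to the coordinate $\x_{h_1}$), tidying the indicator with the restriction rule for $\1^M_K$. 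Well-definedness of $H^h_k$ on $\GG_0$ follows from point~1 used twice together with the bound above: $|\la\Phi|H^h_k\Psi\ra|=|\int\la\phi^h_-(\x)\Phi|\phi^k_-(\x)\Psi\ra\,d\x|\le(\int\|\phi^h_-(\x)\Phi\|^2d\x)^{1/2}(\int\|\phi^k_-(\x)\Psi\|^2d\x)^{1/2}$, each factor being finite on $\GG_0$ by the same estimate, so Riesz gives $H^h_k\Psi\in\GG_s(\HH)$; and the two applications of point~1 also give $\la H^h_k\Phi|\Psi\ra=\int\la\phi^k_-(\x)\Phi|\phi^h_-(\x)\Psi\ra\,d\x=\la\Phi|H^k_h\Psi\ra$, whence $H^k_h\subseteq(H^h_k)^*$ and $H^h_k+H^k_h$ is symmetric on $\GG_0$. (Exchanging $\int d\x$ with the scalar product is harmless: vectors of $\GG_0$ have finitely many components, and componentwise the integrands are $\x$-integrals of $L^2$ functions.)

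Point~3 is where the effort goes. I would compute $\phi^k_-(f)\phi^h_+(g)\Psi_N$ and $\phi^h_+(g)\phi^k_-(f)\Psi_N$ straight from \eqref{pp} and \eqref{defm}, expanding each into a sum over the inserted and the suppressed subsets. Each product separates into a \emph{non-contracted} part, where the $h$ created coordinates are disjoint from the $k$ annihilated ones, and a \emph{contracted} part, where they overlap; the non-contracted parts of the two orderings agree term by term after the appropriate relabelling --- this is precisely where the permutation and restriction identities for $\1^M_K$ are needed, to check that the characteristic functions and the constants match --- and so cancel in the commutator. The contracted part is nonzero only when the entire inserted cluster is afterwards annihilated, which forces $k=h$; carrying out the $\x$-integral against $f^*g$ and collecting the combinatorial factors then yields $h!\la f|g\ra\Psi$, while for $k\ne h$ nothing survives and the commutator is $0$. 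The relations $[\phi^k_\pm(f),\phi^h_\pm(g)]=0$ come out the same way, no contraction being possible. I expect essentially all the genuinely technical content of the proposition to lie in the Appendix lemmas on $\1^M_K$ and in this term-by-term matching in point~3: making the characteristic functions and the normalization constants survive the relabellings is the fiddly heart of the argument.
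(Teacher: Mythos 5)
Your proposal is correct and follows essentially the same route as the paper's Appendix proof: the same key lemma on how $\1^M_K$ interacts with the coincidence measure (the paper's Proposition 8, points 2--4, whose content is exactly your ``$\1^M_K$ survives on $C_\PP$ iff $K$ is a block'' statement), the same Cauchy--Schwarz estimates with the same constants $n!/(n-k)!$ and $(n+k)!/n!$, and the same observation that in the commutator only the full contraction $J=K$ survives, forcing $k=h$. The only organizational differences are that the paper generates the sum over $k$-subsets by writing $\phi^k_+(\x)=S\psi^k_+(\x)$ and bounds $\psi^k_+(f)\Psi_N$ and $G^h_k\Psi_N$ by direct norm computations, whereas you get boundedness of $\phi^k_+(f)$ and well-definedness plus symmetry of $H^h_k+H^k_h$ from boundedness of $\phi^k_-(f)$ via the adjoint/Riesz argument and the identity $\la H^h_k\Phi|\Psi\ra=\int\la\phi^k_-(\x)\Phi|\phi^h_-(\x)\Psi\ra\,d\x$ --- an equivalent, slightly slicker packaging of the same estimates.
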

Due to its length, the proof of this proposition has been moved to the appendix.

Points 2 and 5 derive from the fact that the operators $\phi_\pm^k(f)$ and $H^k_h$, when restricted to $\HH_s^{\odot N}$, are bounded. Note that the commutation rules 3 are similar to the commutation rules for the usual field operators. With respect to point 4, recall the $h, k \geq 1$, so that expression of the type $\int \phi^k_\pm(\x) d\x$ do not define regular operators. I make the following conjectures: (i) the operators $\phi^k(f)$ and $H_h^k + H_k^h$ are essentially self-adjoint on $\GG_0$, and (ii) the vacuum is cyclic, i.e., the set of the vectors of the form 
\begin{equation}
\phi^{k_1}(f_1) \cdots \phi^{k_n}(f_n) \Psi_0
\end{equation}
is total in $\GG_s(\HH)$, where $\Psi_0$ is the vacuum of $\GG_s(\HH)$. With respect to the conjecture (i), note that the operators $\phi^k(f)$ and $H_h^k + H_k^h$ commute with the conjugation $C \Psi = \Psi^*$, and that $C \GG_0 = \GG_0$; according to the von Neumann criterion, \cite{reed}, p. 143, they admit at least a self-adjoint extension.

\subsection{The interaction Hamiltonian}

Reasonable interaction Hamiltonians can therefore be composed by summing up terms of the type $H^k_h + H^h_k$ multiplied by appropriate coupling constants. The simplest interaction Hamiltonian is of the form:
\begin{equation}
H_I = e (H^2_1 + H^1_2),
\end{equation}
which corresponds to the emission and the absorption of a single particle.


\section{The free Hamiltonian} \label{free}

In the previous section we have seen how to build an interaction Hamiltonian $H_I$. In this section the non trivial problem of defining the free Hamiltonian $H_0$ on the coincidence Fock space will be addressed. 

The free Hamiltonian conserves the particle number, and therefore it will be of the form:
\begin{equation}
H_0 = \sum_{n=1}^\infty H_0^{(n)},
\end{equation}
where $H_0^{(n)}$ acts on the space $\HH^{\odot n}$. For a non relativist system of indistinguishable bosons of mass $m$, a reasonable choice for $H_0^{(n)}$ is
\begin{equation}
H_0^{(n)} = \frac{-\td \De^{(n)}}{2m},
\end{equation}
where $\td \De^{(n)}$ is an appropriate Laplacian operator acting on $\HH^{\odot n}$. Here ``appropriate'' means that this operator must act as the normal differential operator $\De$ at any point of every component of $\HH^{\odot n}$, but at the same time it must determine a time evolution which ``mixes'' the various components of $\HH^{\odot n}$. For this reason, in spite of its name, this kind of Hamiltonian in not really free, because determines some sort of interaction between the particles.

In this section the Laplacian $\td \De^{(n)}$ will be defined in a rather natural way only for the spaces $L^2(\X, [1 + \de(\x)] d\x)$ and $\HH^{\odot 2}$. The first space is not a coincidence product as defined in section \ref{coincidence}, but it has similar features and a simpler structure. The definition of the appropriate Laplacian on it is very useful for presenting in a simple way the basic principles of how such a kind of Laplacian can be built in the general case. The Laplacian for the generic space $\HH^{\odot n}$ will be presented in a future paper.

\subsection{The Laplacian on $L^2(\X, [1 + \de(\x)] d\x)$}

The space $\bHH:= L^2(\X, [1 + \de(\x)] d\x)$ is the Hilbert space of a single particle whose configuration space has a metric singularity at the origin. The scalar product is 
\begin{equation}
\la \Phi |\Psi \ra = \int_\X \Phi^* \Psi d\x + \Phi^*(0) \Psi(0),
\end{equation}
from which one easily deduce that 
\begin{equation}
L^2(\X, [1 + \de(\x)] d\x) \equiv L^2(\X,d\x) \oplus \C.
\end{equation}
It is well known from functional analysis that the structure of a self-adjoint operator may largely depend on the domain on which it is defined, and the definition of the required Laplacian will be based on a suitable definition of its domain. Let us consider first the component $L^2(\X, d\x)$ of $\bHH$, and define the domains:
\begin{align}
& D_s := \{ \Psi_s(\x) = u(||\x||)/||\x||: u \in C^\infty_0([0, \infty))\}; \\
& D := \{ \Psi_s + \Psi_n : \Psi_s \in D_s \tx{ and } \Psi_n \in C^\infty_0(\X \sm \{0\}) \},
\end{align}
where the subscripts $s$ and $n$ stay for ``singular'' and ``normal''. Recall that, if $A$ is an open subset of $\R^n$, $C^\infty_0(A)$ denotes the set of smooth functions from $A$ to $\C$ with compact support, and $C^\infty_0([0, \infty))$ is composed by the functions of $C^\infty_0(\R)$ restricted to $[0, \infty)$. The domain $D$ is a dense subspace of $L^2(\X, d\x)$, and $\De D \subseteq D$, where $\De$ is the usual differential operator $\sum_{i=1}^3 \pa_i \pa_i$. In fact, $\De C^\infty_0(\X  \sm \{0\}) \subseteq C^\infty_0(\X \sm \{0\})$ and, if $\Psi_s(\x) = u(||\x||)/||\x|| \in D_s$, then $[\De \Psi_s](\x) = u''(||\x||)/||\x|| \in D_s$, where $u''$ denote the second derivative of $u$.

Let us decompose the vector $\Psi(\x) = \Psi_n(\x) + u(||\x||)/||\x||$ in spherical harmonics:
\[
\Psi(r, s) = \sum_{l=0}^\infty \sum_{m=-l}^{l} \Psi_{lm}(r) Y_l^m(s),
\]
where $s \in S^2$, $Y_l^m$ are the spherical harmonics, and
\[
\Psi_{lm}(r) = \int_{S^2} Y^{m*}_l(s) \Psi(r, s) ds.
\]
Let us also introduce the functions $u_{lm}(r):= r \Psi_{lm}(r)$. It is easy to realize that, for $l > 0$, $\Psi_{lm}(r) = u_{lm}(r) = 0$ in a neighborhood of $0$, and $u(0) = u_{00}(0) Y_0^0 = u_{00}(0)/\sqrt{4\pi}$.

The operator $\De$ is not symmetric on $D$, but:
\begin{prop} For $\Phi, \Psi \in D$:
\begin{equation}
\la \Phi |\De \Psi \ra -  \la \De \Phi | \Psi \ra = w'^*_{00}(0)u_{00}(0) - w^*_{00}(0) u'_{00}(0),
\end{equation}
where $w_{lm}(r):= r \Phi_{lm}(r)$.
\end{prop}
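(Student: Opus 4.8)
The plan is to compute $\la \Phi | \De \Psi \ra - \la \De \Phi | \Psi \ra$ by reducing to a one-dimensional radial integration by parts, exploiting the spherical-harmonic decomposition already set up. Since $D \subseteq L^2(\X, d\x)$ and both $\De\Phi, \De\Psi$ lie in $D$, the scalar products here are the ordinary $L^2(\X,d\x)$ products; the $\C$ summand of $\bHH$ plays no role yet (it will enter only when one later chooses the self-adjoint extension). So the whole statement is really a statement about the Laplacian on $\R^3$ acting on functions that are smooth away from the origin but may have a $1/r$ singularity at the origin.

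First I would write $\De$ in spherical coordinates, $\De = \frac{1}{r}\pa_r^2 r - \frac{L^2}{r^2}$ where $L^2$ is the angular part with eigenvalue $l(l+1)$ on $Y_l^m$. Plugging in $\Psi = \sum_{lm}\Psi_{lm}(r)Y_l^m(s)$ and using $u_{lm}(r) = r\Psi_{lm}(r)$, one gets $[\De\Psi]_{lm}(r) = \frac{1}{r}\big(u_{lm}''(r) - \frac{l(l+1)}{r^2}u_{lm}(r)\big)$. By orthonormality of the $Y_l^m$ on $S^2$, the full scalar product decouples:
\[
\la \Phi | \De \Psi \ra = \sum_{l,m} \int_0^\infty \ol{w_{lm}(r)}\,\Big(u_{lm}''(r) - \tfrac{l(l+1)}{r^2}u_{lm}(r)\Big)\, dr,
\]
where I have used $r^2\,dr$ from the volume element against the two factors of $1/r$ from $\Psi_{lm} = u_{lm}/r$ and $w_{lm}/r$. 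Subtracting the conjugate expression, the $\frac{l(l+1)}{r^2}$ terms and all the "bulk" contributions cancel, leaving $\sum_{l,m}\int_0^\infty (\ol{w_{lm}}\,u_{lm}'' - \ol{w_{lm}''}\,u_{lm})\,dr$.

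Next I would integrate by parts twice on each radial integral. The boundary terms at $r = \infty$ vanish because all the functions involved have compact support. The boundary terms at $r = 0$ collapse to $\ol{w_{lm}'(0)}\,u_{lm}(0) - \ol{w_{lm}(0)}\,u_{lm}'(0)$. Here the key observation — already recorded in the excerpt — is that for $l>0$ the functions $u_{lm}$ vanish identically near $0$ (regularity of $\Psi$ away from a possible $1/r$ singularity forces the higher multipoles to be genuinely smooth, hence $O(r^{l+1})$, in particular vanishing to high order), so only the $l=m=0$ term survives with a nonzero boundary contribution. That yields exactly $\ol{w_{00}'(0)}\,u_{00}(0) - \ol{w_{00}(0)}\,u_{00}'(0)$, which is the claimed formula (with the conjugation written as $w'^*_{00}(0)$ etc.).

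The main obstacle I anticipate is justifying the vanishing of the $l>0$ boundary terms and, more subtly, making sure the integration by parts is legitimate near $r=0$ given that $u_{00}$ need not vanish there: one must check that $u_{00}, u_{00}', w_{00}, w_{00}'$ all have finite limits at $0$ (true since $u(r) = u_{00}(r)/\sqrt{4\pi}$ near $0$ with $u \in C_0^\infty([0,\infty))$, plus the normal part contributes nothing near $0$) and that the integrands $\ol{w_{00}}\,u_{00}''$ are integrable down to $0$ — which holds because $u_{00}''$ is bounded. A small bookkeeping point is confirming that the sum over $l,m$ is finite (only finitely many multipoles are nonzero, since $\Psi$ is a finite sum of a $D_s$-part, which is purely $l=0$, and a $C_0^\infty(\X\sm\{0\})$-part, whose spherical-harmonic expansion one can truncate modulo an $L^2$-error — or one simply notes the formula is bilinear and continuous and checks it on each multipole separately). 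None of this is deep, but it is the place where the argument needs care rather than routine manipulation.
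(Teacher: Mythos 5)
Your proof is correct and follows essentially the same route as the paper: both reduce $\la \Phi|\De\Psi\ra - \la\De\Phi|\Psi\ra$ to a Wronskian-type boundary term at the origin via the spherical-harmonic decomposition, with only the $l=0$ mode surviving because the higher multipoles come entirely from the $C_0^\infty(\X\sm\{0\})$ part and hence vanish near $0$. The only cosmetic difference is that the paper excises a ball $B_r$ and applies the divergence theorem to get a surface integral over $\pa B_r$, whereas you perform the equivalent radial integration by parts mode by mode.
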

\begin{proof}
Let $\la \Phi |\sra{\De}| \Psi \ra := \la \Phi |\De \Psi \ra -  \la \De \Phi | \Psi \ra$, and let $B_r$ a ball of radius $r$ centered at the origin. We have:
\[
\la \Phi| \sra{\De}| \Psi \ra  = \int_{\X} \Phi^* \sra{\Delta} \Psi d\x = \lim_{r \to 0} \int_{\X \setminus B_r} \Phi^* \sra{\Delta} \Psi d\x = \lim_{r \to 0} \int_{\pa B_r} \Phi^* \sra{\na} \Psi \hat n ds,
\]
where $ds$ is the surface element of $\pa B_r$, and $\hat n$ is the unit normal vector to the surface pointing towards the origin. The last equality can be derived by applying the divergence theorem; note that:
\[
\int_{\partial B_r} \Phi^* \nabla \Psi \hat n ds = \int_{\X \sm B_r} \nabla (\Phi^* \nabla \Psi) d\x = \int_{\X \sm B_r} \nabla \Phi^* \nabla \Psi d\x + \int_{\X \sm B_r} \Phi^* \Delta \Psi d\x.
\]
Then:
\begin{align*}
& \lim_{r \to 0} \int_{\pa B_r} \Phi^* \sra{\na} \Psi \hat n ds = 
 - \lim_{r \to 0 } r^2 \sum_{l,m} [\Phi^*_{lm}(r) \Psi'_{lm}(r) - \Phi'^*_{lm}(r) \Psi_{lm}(r)] = \\
& =  - \lim_{r \to 0 } r^2 [\Phi^*_{00}(r) \Psi'_{00}(r) - \Phi'^*_{00}(r) \Psi_{00}(r)] = - [w^*_{00}(r) u'_{00}(r) - w'^*_{00}(r) u_{00}(r)].
\end{align*}
\end{proof}
In order to simplify the notation, let us introduce the two operators $A, A': D \to \C$, defined as follows:
\begin{equation}
A \Psi = u_{00}(0) \tx{ and } A'\Psi = u'_{00}(0),
\end{equation}
so that 
\begin{equation}
\la \Phi |\De \Psi \ra -  \la \De \Phi | \Psi \ra = A'\Phi^* A \Psi - A\Phi^* A'\Psi.
\end{equation}
If we consider the whole space $\bHH$, the presence of the component $\C$ in $\bHH$ allows us to remove this asymmetry of $\De$, and this will determine the mixing of the components. Concretely, for $\lam = \lam^* \neq 0$, let us define the following domain of $\bHH$:
\begin{equation}
\bar D_\lam:=\{(\Psi, A \Psi/\lam) \in \bHH: \Psi \in D\}.
\end{equation}
It is easy to recognize that this domain is dense in $\bHH$. On this domain let us define the operator 
\begin{equation}
\bDe_\lam (\Psi, A \Psi/\lambda) := (\De \Psi, \lam A' \Psi).
\end{equation}
Here $\lam$ can be considered as a sort of coupling constant between the two components $L^2(\X, d\x)$ and $\C$. For simplifying the notation, the subscript $\lam$ will be omitted. One can prove that:
\begin{prop}
The operator $-\bDe$ is symmetric and positive on the domain $\bar D$.
\end{prop}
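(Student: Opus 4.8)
The plan is to deduce the symmetry of $-\bDe$ directly from the asymmetry identity proved in the preceding Proposition, and its positivity from a single radial integration by parts.

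\emph{Symmetry.} I would take $\Phi' = (\Phi, A\Phi/\lam)$ and $\Psi' = (\Psi, A\Psi/\lam)$ in $\bar D$, with $\Phi, \Psi \in D$, and expand $\la \Phi' | \bDe \Psi' \ra - \la \bDe \Phi' | \Psi' \ra$ using the definition $\bDe(\Psi, A\Psi/\lam) = (\De\Psi, \lam A'\Psi)$ and the scalar product of $\bHH \equiv L^2(\X, d\x) \oplus \C$. Since $\lam = \lam^*$, the $\C$-components contribute $A\Phi^* A'\Psi - A'\Phi^* A\Psi$, so that
\[
\la \Phi' | \bDe \Psi' \ra - \la \bDe \Phi' | \Psi' \ra = \big( \la \Phi | \De \Psi \ra - \la \De \Phi | \Psi \ra \big) + \big( A\Phi^* A'\Psi - A'\Phi^* A\Psi \big) .
\]
By the preceding Proposition the first bracket equals $A'\Phi^* A\Psi - A\Phi^* A'\Psi$, i.e.\ minus the second bracket, so the right-hand side vanishes. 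Hence $\bDe$, and therefore $-\bDe$, is symmetric on $\bar D$.

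\emph{Positivity.} Since $-\bDe$ is symmetric, $\la \Psi' | -\bDe \Psi' \ra$ is real, and from the definitions it equals $-\la \Psi | \De \Psi \ra - A\Psi^* A'\Psi = -\la \Psi | \De \Psi \ra - u^*_{00}(0)\, u'_{00}(0)$. To evaluate $-\la \Psi | \De \Psi \ra$ I would expand $\Psi$ in spherical harmonics, $\Psi(r,s) = \sum_{l,m} \Psi_{lm}(r) Y_l^m(s)$, pass to $u_{lm}(r) := r\Psi_{lm}(r)$, and use that $\De$ acts channelwise as $\Psi_{lm} Y_l^m \mapsto \big( u''_{lm}/r - l(l+1) u_{lm}/r^3 \big) Y_l^m$ (legitimate since $\De D \subseteq D$, so $\De\Psi \in L^2$). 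Writing $d\x = r^2\, dr\, ds$ and integrating by parts once in $r$, the surface term at $r = \infty$ vanishes by compact support, and the surface term at $r = 0$ survives only in the channel $l = 0$ — because $u_{lm} \equiv 0$ near the origin for $l > 0$ — where it equals $+\, u^*_{00}(0)\, u'_{00}(0)$. This boundary term cancels exactly the finite-dimensional term above, leaving
\[
\la \Psi' | -\bDe \Psi' \ra = \sum_{l,m} \left( \int_0^\infty |u'_{lm}(r)|^2 \, dr + l(l+1) \int_0^\infty \frac{|u_{lm}(r)|^2}{r^2} \, dr \right) \geq 0 .
\]

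\emph{Main obstacle.} The computations are essentially routine; the only delicate point is the bookkeeping of the boundary contribution at the origin. One must check that for $l > 0$ the radial functions $u_{lm}$ — which arise only from the part $\Psi_n$ that vanishes near $0$ — produce no boundary term there, while the single $l = 0$ contribution yields precisely the quantity $u^*_{00}(0)\, u'_{00}(0)$ needed to cancel the contribution of the $\C$-component of $\bHH$. Everything else is a standard radial integration by parts, convergent term by term because each $u_{lm}$ is smooth with compact support in $[0,\infty)$.
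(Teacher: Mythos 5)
Your proposal is correct and follows essentially the same route as the paper: symmetry is obtained by combining the asymmetry identity $\la \Phi|\De\Psi\ra - \la\De\Phi|\Psi\ra = A'\Phi^* A\Psi - A\Phi^* A'\Psi$ with the equal and opposite contribution of the $\C$-components, and positivity by the channelwise radial integration by parts whose $l=0$ boundary term $u^*_{00}(0)u'_{00}(0)$ cancels $A\Psi^* A'\Psi$. The sign bookkeeping at the origin matches the paper's computation.
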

\begin{proof} Symmetry:
\begin{align*}
& \la (\Phi, A \Phi/\lam) | \sra{\bDe}|(\Psi, A \Psi/\lam) \ra = \la (\Phi, A \Phi/\lam) | (\De\Psi, \lam A' \Psi) \ra - \la (\De\Phi, \lam A' \Phi)|(\Psi, A \Psi/\lam) \ra = \\
& = \la \Phi |\sra{\De}|\Psi \ra + A \Phi^* A' \Psi - A' \Phi^* A \Psi =0.
\end{align*}
Positivity: recall that, in terms of the functions $u_{lm}$, the Laplacian on $D$ can be written as follows:
\[
[\De \Psi](r, s) = \sum_{l,m} \frac{1}{r} \left [u''_{lm} - \frac{l(l+1)}{r^2} u_{lm} \right ] Y_l^m(s)
\]
So:
\begin{align*}
& - \la (\Psi, A\Psi/\lam)|\bDe (\Psi, A\Psi/\lam) \ra = - \int \Psi^* \De \Psi d\x - A\Psi^* A'\Psi = \\
& = - \sum_{l,m} \int_0^\infty u^*_{lm} \left [u''_{lm} - \frac{l(l+1)}{r^2} u_{lm} \right ] dr  - A\Psi^* A'\Psi = \\
& = - \sum_{l,m} \left [ \int_0^\infty - \frac{l(l+1)}{r^2} |u_{lm}|^2 dr + u'_{lm} u^*_{lm} \Big |_0^\infty - \int_0^\infty |u'_{lm}|^2 dr \right ] - A\Psi^* A'\Psi  = \\
& = \sum_{l,m} \int_0^\infty \left [ \frac{l(l+1)}{r^2} |u_{lm}|^2 + |u'_{lm}|^2 \right ] dr + u^*_{00}(0)u'_{00}(0) - A\Psi^* A'\Psi = \\
& = \sum_{l,m} \int_0^\infty \left [ \frac{l(l+1)}{r^2} |u_{lm}|^2 + |u'_{lm}|^2 \right ] dr \geq 0.
\end{align*}

\end{proof}
From a theorem of functional analysis we know that a semi-bounded symmetric operator admits a privileged self-adjoint extension, the so called Friedrich extension \cite{reed}, p. 176. I make the following conjectures: (i) $\bDe$ is essentially self-adjoint on $\bar D$, and (ii) the spectrum of the self-adjoint extension is absolutely continuous and equal to $\R^+$.

Let $\td \De$ the privileged (or unique) self-adjoint extension of $\bDe$. Let us therefore define the Hamiltonian 
\begin{equation}
H := \frac{- \td \De}{2m}.
\end{equation}
In order to show that $H$ mixes the components of $\bHH$ it is sufficient to show that the vector $(0, 1) \in \bHH$ is not an eigenvector of $H$. In fact, suppose that $H$ does not mixes the components; then $e^{-i Ht} (0, 1)=(0, c(t))$, with $|c(t)|=1$ and $c(t_1)c(t_2) = c(t_1 + t_2)$; but this implies that $(0, 1)$ is an eigenvector of $H$. In order to show that $(0, 1)$ this is not an eigenvector of $H$, let us proceed as follows: let $\Psi \in D$, and suppose that $H (0, 1)= E (0, 1)$ for some $E \geq 0$. Then we have at the same time:
\begin{align*}
& \la (\Psi, A\Psi/\lam) |H (0, 1) \ra = E \la (\Psi, A\Psi/\lam) | (0, 1) \ra = E A\Psi^*/\lam;\\
& \la (\Psi, A\Psi/\lam) |H (0, 1) \ra = \la H (\Psi, A\Psi/\lam) |(0, 1) \ra = \la (-\De \Psi, -\lam A'\Psi) |(0, 1) \ra/(2m) = -\lam A' \Psi^*/(2m).
\end{align*}
This implies $2 m E A\Psi = - \lam^2 A' \Psi$ for any $\Psi \in D$, which is impossible.

\subsection{The Laplacian on $\HH^{\odot 2}$}

In this subsection the reasoning of the previous subsection will be largely repeated. In some cases also the same symbols will be utilized for the corresponding (though different) entities. The proofs of the propositions are omitted because they are very similar to the proofs of the corresponding proposition of the previous subsection.

Recall that 
\begin{equation}
\HH^{\odot 2}=L^2(\X^2, [1 + \de(\x_1 - \x_2)] d\x_1 d\x_2) \equiv L^2(\X^2, d\x_1d\x_2) \oplus L^2(\X, d\x),
\end{equation}
and the scalar product is 
\begin{equation}
\la \Phi|\Psi \ra = \int_{\X^2} \Phi^* \Psi d\x_1 d\x_2 + \int_\X \Phi^*(\x, \x) \Psi(\x,\x) d\x.
\end{equation}
Let us introduce the new coordinates 
\begin{equation}
\x := \frac{\x_1 + \x_2}{\sqrt{2}} \tx{ and } \x_D := \frac{\x_1 - \x_2}{\sqrt{2}},
\end{equation}
The meaning of these coordinates is the following: let $C$ denote the set $\{(\x_1, \x_2 ) \in \X^2: \x_1 = \x_2\}$, which is the coincidence plane of $\X^2$ relative to the partition $\{\{1,2\}\}$. One can easily verify that, given $(\x_1, \x_2) \in \X^2$, then $||\x_D||$ is the distance between $(\x_1, \x_2) $ and $C$, and $(\x, \x)$ is the point of $C$ closest to $(\x_1, \x_2)$. In the new coordinates, $C=\{(\x_D, \x): \x_D=0\}$. 

Analogously to the previous case, let us define the following two subspaces of $L^2(\X^2, d\x_1 d\x_2)$:
\begin{align}
& D^{(2)}_s := \{ \Psi_s(\x_1, \x_2) = u(||\x_D||, \x)/||\x_D||: u \in C^\infty_0([0, \infty) \times \X)\}; \\
& D^{(2)} := \{ \Psi_s + \Psi_n : \Psi_s \in D_s^{(2)} \tx{ and } \Psi_n \in C^\infty_0(\X^2 \sm C) \},
\end{align}
where $C^\infty_0([0, \infty) \times \X)$ is composed by the functions of $C^\infty_0(\R \times \X)$ restricted to $[0, \infty) \times \X$; as a consequence, if $u \in C^\infty_0([0, \infty) \times \X)$, then $u(0, \cdot) \in C^\infty_0(\X)$. Again $D^{(2)}$ is a dense subspace of $L^2(\X^2, d\x_1 d\x_2)$, and it is an invariant subspace for the differential operator $\De^{(2)} = \De_1 + \De_2$. If $\x_D$ is expressed in spherical coordinates $(r, s)$, a vector $\Psi(r, s, \x) = \Psi_n(r, s, \x) + u(r, \x)/r$ can be expanded in spherical harmonics relative to the variable $\x_D$ as follows:
\[
\Psi(r, s, \x)=\sum_{l,m} \Psi_{lm}(r, \x) Y_l^m(s)
\]
where 
\[
\Psi_{lm}(r, \x) = \int_{S^2} Y^{m*}_l(s) \Psi(r, s, \x) ds.
\]
Here too we introduce the functions $u_{lm}(r, \x):= r \Psi_{lm}(r, \x)$. In this case, for $l > 0$, we have that $\Psi_{lm}(r, \x) = u_{lm}(r, \x) = 0$ in a neighborhood of $C$, and $u(0, \x) = u_{00}(0, \x)\sqrt{4\pi}$. As a consequence, $u_{00}(0, \cdot) \in C^\infty_0(\X)$. Let us introduce the operators $A, A':D^{(2)} \to C^\infty_0(\X)$, defined as follows: 
\begin{equation}
A\Psi := u_{00}(0, \cdot) \tx{ and } A'\Psi := u'_{00}(0, \cdot),
\end{equation}
where the apex denotes the derivative relative to the first variable of the function $u_{00}$.
\begin{prop} For $\Phi, \Psi \in D^{(2)}$:
\begin{equation}
\la \Phi |\sra{\De}| \Psi \ra = \int_\X [A'\Phi^*A \Psi - A\Phi^*A' \Psi] d\x.
\end{equation}
\end{prop}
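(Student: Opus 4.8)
Here $\la\cdot|\cdot\ra$ denotes the scalar product of the component $L^2(\X^2,d\x_1d\x_2)$ in which $D^{(2)}$ sits, i.e. $\int_{\X^2}\Phi^*\Psi\,d\x_1d\x_2$, and, writing $\De^{(2)}=\De_1+\De_2$, one sets $\la\Phi|\sra{\De}|\Psi\ra:=\la\Phi|\De^{(2)}\Psi\ra-\la\De^{(2)}\Phi|\Psi\ra$, exactly as in the previous Proposition. The plan is to repeat the tube--excision argument used there, but now in the variable $\x_D$ and carrying along the extra integration over $\x$. First I would pass to the coordinates $(\x,\x_D)$; this is an orthogonal change of variables on $\R^6$, so $\De^{(2)}=\De_\x+\De_{\x_D}$ and $d\x_1d\x_2=d\x\,d\x_D$. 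On $D^{(2)}$ the normal part vanishes near $C=\{\x_D=0\}$ and the singular part is spherically symmetric in $\x_D$, of the form $u(\|\x_D\|,\x)/\|\x_D\|$ with $u\in C^\infty_0([0,\infty)\times\X)$; hence near $C$ every element of $D^{(2)}$ behaves like $u(r,\x)/r$ (with $r:=\|\x_D\|$), $\De^{(2)}$ sends it to something of the form $\tilde u(r,\x)/r$, and the integrand $\Phi^*\sra{\De^{(2)}}\Psi$ is $O(1/r^2)$, which is integrable against $d\x\,d\x_D=d\x\,r^2dr\,d\Omega$ on compact sets. In particular all the scalar products that occur are finite.

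Next I would excise the tube $C_\epsilon:=\{\|\x_D\|<\epsilon\}$, so that
\[
\la\Phi|\sra{\De}|\Psi\ra=\lim_{\epsilon\to0}\int_{\X^2\sm C_\epsilon}\Phi^*\sra{\De^{(2)}}\Psi ,
\]
and apply the divergence theorem on $\X^2\sm C_\epsilon$. There is no flux at infinity (compact support), and the outward normal $\hat n$ of this region along $\{\|\x_D\|=\epsilon\}$ points toward $C$, hence lies entirely in the $\x_D$-subspace; therefore only the $\na_{\x_D}$-flux survives (the $\De_\x$ part contributes nothing), giving
\[
\la\Phi|\sra{\De}|\Psi\ra=\lim_{\epsilon\to0}\int_\X d\x\int_{\{\|\x_D\|=\epsilon\}}\Phi^*\sra{\na_{\x_D}}\Psi\,\hat n\,ds_{\x_D}.
\]

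Now I would expand $\Phi,\Psi$ in spherical harmonics in the direction of $\x_D$, with coefficients $\Psi_{lm}(r,\x)$ and $u_{lm}:=r\Psi_{lm}$, $w_{lm}:=r\Phi_{lm}$, exactly as in the proof of the previous Proposition. Using orthonormality of the $Y_l^m$ and cancelling the singular $1/r$, $1/r^2$ terms, the inner surface integral becomes $-\sum_{l,m}[w^*_{lm}(\epsilon,\x)u'_{lm}(\epsilon,\x)-w'^*_{lm}(\epsilon,\x)u_{lm}(\epsilon,\x)]$, the prime denoting $\pr$. For $l\ge1$ these terms vanish once $\epsilon$ is small, since the corresponding $u_{lm},w_{lm}$ are identically zero near $C$; the $l=0$ term tends to $-[w^*_{00}(0,\x)u'_{00}(0,\x)-w'^*_{00}(0,\x)u_{00}(0,\x)]$.

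Finally I would move $\lim_{\epsilon\to0}$ inside $\int_\X d\x$ --- legitimate because $u_{00},w_{00}$ and their first $r$-derivatives are smooth with compact support in $\x$, so the convergence is uniform on a fixed compact set --- and conclude
\[
\la\Phi|\sra{\De}|\Psi\ra=\int_\X[w'^*_{00}(0,\x)u_{00}(0,\x)-w^*_{00}(0,\x)u'_{00}(0,\x)]\,d\x=\int_\X[A'\Phi^*A\Psi-A\Phi^*A'\Psi]\,d\x,
\]
using $A\Psi=u_{00}(0,\cdot)$, $A'\Psi=u'_{00}(0,\cdot)$ and the analogous identities for $\Phi$. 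The algebra (the cancellation of the singular terms and the reduction to a radial Wronskian) is word for word the one-particle computation and can simply be cited from the previous proof; the only genuinely new content, and the point that needs care, is the bookkeeping of the extra $\x$-integration: integrability of the singularity along $C$ (so the excision limit returns the full integral), absence of any flux at infinity or $\x$-boundary term in the divergence-theorem step, and the interchange of $\lim_{\epsilon\to0}$ with $\int_\X d\x$ --- all immediate consequences of the uniform compact support of the functions in $D^{(2)}$. Equivalently, one can split $\sra{\De^{(2)}}$ by Fubini into its $\De_\x$ and $\De_{\x_D}$ parts, kill the first by integration by parts in $\x$, and apply the one-particle result in $\x_D$ fibrewise over $\x$.
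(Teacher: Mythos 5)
Your proof is correct and follows exactly the route the paper indicates: it repeats the tube-excision / divergence-theorem / spherical-harmonic argument of the one-particle proposition, with the ball $B_r$ replaced by the tube $C_r=\{(\x_D,\x): ||\x_D||\leq r\}$, which is precisely what the paper's (otherwise omitted) proof does. The additional care you take with the extra $\x$-integration --- integrability of the $O(1/r^2)$ singularity against $r^2\,dr$, absence of boundary terms at infinity, and the interchange of $\lim_{\epsilon\to 0}$ with $\int_\X d\x$ via compact support --- simply fills in details the paper leaves implicit.
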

The proof is analogous to the proof of proposition 4 in the previous subsection; the ball $B_r$ is replaced by the set $C_r:=\{ (\x_D, \x) \in \X^2 : ||\x_D|| \leq r\}$.

For $\lam = \lam^* \neq 0$, let us define the domain 
\begin{equation}
\bar D^{(2)}:=\{(\Psi, A \Psi/\lam) \in \HH^{\odot 2}: \Psi \in D^{(2)}\},
\end{equation}
and on this domain let us define the operator 
\begin{equation}
\bDe^{(2)} (\Psi, A \Psi/\lambda) := (\De^{(2)} \Psi, \lam A' \Psi + \De A \Psi/\lambda),
\end{equation}
where $\De$ is the normal Laplacian on $L^2(\X, d\x)$. Note that in this case $\bDe^{(2)}$ acts directly as a normal Laplacian also on the component $L^2(\X, d\x)$ of $\HH^{\odot 2}$. Here too the domain $\bar D^{(2)}$ is dense in $\HH^{\odot 2}$, and again one can prove that: 
\begin{prop}
the operator $\bDe^{(2)}$ is symmetric and positive on the domain $\bar D^{(2)}$.
\end{prop}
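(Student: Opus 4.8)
The plan is to follow the proof of Proposition 6 almost verbatim, the only genuinely new ingredient being the extra summand $\De A\Psi/\lam$ in the second component of $\bDe^{(2)}$, where $\De$ is the ordinary Laplacian on $L^2(\X,d\x)$ and, for $\Psi\in D^{(2)}$, $A\Psi\in C^\infty_0(\X)$. For symmetry, using the decomposition $\HH^{\odot 2}\equiv L^2(\X^2,d\x_1 d\x_2)\oplus L^2(\X,d\x)$ and the scalar product written just above, I would expand $\la(\Phi,A\Phi/\lam)|\sra{\bDe^{(2)}}|(\Psi,A\Psi/\lam)\ra$ into three contributions: (i) the $L^2(\X^2)$ term $\la\Phi|\sra{\De^{(2)}}|\Psi\ra$, which by the preceding proposition equals $\int_\X[A'\Phi^*A\Psi-A\Phi^*A'\Psi]\,d\x$; (ii) the cross term produced by the $\lam A'\Psi$ piece of the second component, equal to $\int_\X[A\Phi^*A'\Psi-A'\Phi^*A\Psi]\,d\x$, which cancels (i) exactly; and (iii) the cross term produced by the $\De A\Psi/\lam$ piece, equal to $\lam^{-2}\int_\X[A\Phi^*\De A\Psi-(\De A\Phi)^*A\Psi]\,d\x$, which vanishes by Green's identity since $A\Phi,A\Psi\in C^\infty_0(\X)$. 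Hence $\bDe^{(2)}$ is symmetric on $\bar D^{(2)}$.

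For positivity, I would compute $-\la(\Psi,A\Psi/\lam)|\bDe^{(2)}(\Psi,A\Psi/\lam)\ra$ and exhibit it as a sum of non-negative terms. The second-component part $-\int_\X(A\Psi/\lam)^*(\lam A'\Psi+\De A\Psi/\lam)\,d\x$ splits into $-\int_\X A\Psi^*A'\Psi\,d\x$ and $\lam^{-2}\int_\X|\na A\Psi|^2\,d\x\ge0$ (the latter after one integration by parts in $\x$, with no boundary term because $A\Psi\in C^\infty_0(\X)$). For the $L^2(\X^2)$ part $-\int_{\X^2}\Psi^*\De^{(2)}\Psi\,d\x_1 d\x_2$ I would pass to the orthogonal coordinates $(\x,\x_D)$, write $\De^{(2)}=\De_\x+\De_{\x_D}$, expand in spherical harmonics in $\x_D$, and substitute $u_{lm}=r\Psi_{lm}$, so that the radial operator becomes $\tfrac1r\bigl(u''_{lm}-l(l+1)r^{-2}u_{lm}\bigr)$ and the transverse part $\tfrac1r\De_\x u_{lm}$. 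Integrating by parts once in $r$ and once in $\x$: the $\x$-integration contributes no boundary term since $u_{lm}(r,\cdot)\in C^\infty_0(\X)$; the $r$-integration leaves the boundary term $\int_\X u_{00}^*(0,\x)u'_{00}(0,\x)\,d\x=\int_\X A\Psi^*A'\Psi\,d\x$ — only $l=0$ survives, since $u_{lm}$ vanishes near $r=0$ for $l>0$ — together with the manifestly non-negative
\[
\sum_{l,m}\int_\X\int_0^\infty\left[|u'_{lm}|^2+\frac{l(l+1)}{r^2}|u_{lm}|^2+|\na_\x u_{lm}|^2\right]dr\,d\x.
\]
The boundary term cancels $-\int_\X A\Psi^*A'\Psi\,d\x$, and since $\lam=\lam^*\neq0$ forces $\lam^{-2}>0$, what remains is a sum of non-negative terms; hence $-\bDe^{(2)}\ge0$ on $\bar D^{(2)}$.

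The integrations by parts are routine; the delicate part is the bookkeeping of boundary terms. The points requiring care are: verifying that $\De^{(2)}=\De_\x+\De_{\x_D}$ under the rotation $(\x_1,\x_2)\mapsto(\x,\x_D)$ (the change of variables is orthogonal, so the Laplacian is preserved); confirming that the radial profiles $u_{lm}$ vanish in a neighborhood of $r=0$ for $l>0$, so that only the $(0,0)$ mode produces a boundary contribution on the coincidence plane $C$; and checking that $u_{lm}(r,\cdot)$ remains compactly supported in $\x$, so that the transverse integration by parts is boundary-free. I expect no real difficulty from the new $\De A\Psi/\lam$ term, precisely because $\De$ is symmetric and non-positive on $C^\infty_0(\X)$: it contributes nothing to the antisymmetric part and only the non-negative amount $\lam^{-2}\|\na A\Psi\|^2$ to $-\la\cdot|\bDe^{(2)}\cdot\ra$.
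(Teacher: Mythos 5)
Your proposal is correct and follows exactly the route the paper intends: the paper omits this proof, stating only that it is analogous to the proof for $\bDe$ on $L^2(\X,[1+\de(\x)]d\x)$, and your write-up is the faithful adaptation of that argument, with the genuinely new term $\De A\Psi/\lam$ correctly disposed of by the symmetry and negativity of $\De$ on $C^\infty_0(\X)$ (no antisymmetric contribution, and the extra non-negative piece $\lam^{-2}\|\na A\Psi\|^2$). The boundary bookkeeping --- only the $(0,0)$ mode surviving at $r=0$ and the resulting cancellation of $\int_\X A\Psi^* A'\Psi\,d\x$ --- matches the one-particle computation, so nothing is missing.
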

I make for $\bDe^{(2)}$ the same conjectures which have been for $\bDe$. The privileged or unique self-adjoint extension of $\bDe^{(2)}$ will be denoted by $\td \De^{(2)}$, and the Hamiltonian is
\begin{equation}
H^{(2)}:= \frac{-\td \De^{(2)}}{2m}.
\end{equation}


\section{Conclusion} \label{conclusion}
In the context of non-relativistic QFT, a method has been proposed for multiplying field operators at the same spatial point and obtaining rigorously defined interaction terms for the Hamiltonian. The basic idea is to modify the space $L^2(\R^{3n}, d^{3n}x)$, which is the usual Hilbert space of a system of $n$-non-relativistic particles, by adding singular measures to the Lebesgue measure $d^{3n}x$, in correspondence of the subspaces of $\R^{3n}$ in which the positions of two or more particle coincide. The modified Hilbert space has been referred to as the {\it coincidence product}. In this new context, powers of the field operators at the same spatial point are not so singular as in the usual context, so that annihilation and creation field operators can be multiplied and integrated over 3-space giving rise to regular operators; these operators can be utilized to compose the interaction Hamiltonian.

After having built a regular interaction Hamiltonian, the non trivial issue of how to build a free Hamiltonian on the coincidence product has been addressed. In the non-relativistic domain this problem reduces to the problem of defining a non trivial Laplacian on the coincidence product, where ``non trivial'' means that the ``normal'' and the ``singular'' components of the coincidence product have to be mixed by the time evolution determined by the Laplacian. This problem can be resolved by a suitable choice of the domain of definition of the Laplacian. Only the simplest case of two particles has been developed into the details.

This paper leaves various open questions, which basically can be grouped into the following three areas. 

1. In the non-relativist domain, is necessary (i) to define the free Hamiltonian for the general case of $n$-particles and to complete various partial results relative to (ii) the self-adjointness of the free and the complete Hamiltonian and (iii) to the the structure of their spectrum.

2. It is open the question if the proposed method can be extended to the relativistic domain. Results in this sense will be possibly presented in a future specific paper.

3. If the answer to point 2 is positive, it is necessary to verify if this mathematical construction has something to do with physical reality, that is, if the empirical predictions of standard QFT can be obtained, now in a rigorous mathematical manner, in this new scheme.


\appendix
\section{Appendix}

\subsection{The function $\1^N_I$} \label{app1}

In this subsection some properties of the function $\1^N_I$ are proved. They will be used in the next subsection to prove proposition 3 in section \ref{properties}.

For $I \subseteq N$, the function $\1_I^N: \X^N \to \{0,1\}$ is the characteristic function of the set 
\begin{equation}
D^N_I:= \{ x \in \X^N: i, j \in I \Leftrightarrow x(n_i) = x(n_j)\}.
\end{equation}

\begin{prop} In this proposition, $I, J$ are non empty subsets of $N$, and $h_N$ is a $\mu^{\odot N}$-integrable function from $\X^N \to \C$.

\begin{itemize}
\item[1.] For $\si \in \Sigma_N$:
\begin{equation}
\1^N_I \cdot f^{-1}_\si =\1^N_{\si(I)}.
\end{equation}
\item[2.]
 \begin{equation}
\int_{\X^J} \1^N_I(x_{N \sm J}, x_J) h_N(x_{N \sm J}, x_J) d\mu_J = \Big \la
\begin{array}{ll}
\1^{N \sm J}_I(x_{N \sm J}) \int_{\X^J} h_N(x_{N \sm J}, x_J) d\mu_J & \text{ for } I \cap J = 0;\\
\int_{\X^J} h_N(x_{N \sm J}, x_J) d\mu_J & \text{ for } I = J; \\
0 & \text{ otherwise}.
\end{array}
\end{equation}
\item[3.]
\begin{equation}
\int 1^N_I(x_N) h_N(x_N) d\mu^{\odot N} = \int h_N(x_N) d\mu_I d\mu^{\odot N\sm I}.
\end{equation}
\item[4.] Let $I \cap J = \emptyset$; then:
\begin{equation}
\1^N_I(x_N) \1^{N \sm I}_J(\pi_{N \sm I} x_N) = \1_I^{N\sm J}(\pi_{N\sm J} x_N) \1^N_J(x_N) = \1^N_I(x_N) \1^N_J(x_N)
\end{equation}
\end{itemize}
\end{prop}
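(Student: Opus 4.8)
The plan rests on the pointwise reading of $\1^N_I$ as the indicator of $D^N_I$: writing $\sim_{x_N}$ for the equivalence relation on $N$ given by $i \sim_{x_N} j \iff \x_i = \x_j$, the value $\1^N_I(x_N)$ is $1$ precisely when the indices of $I$ are mutually $\sim_{x_N}$-equivalent while no index outside $I$ is $\sim_{x_N}$-equivalent to them --- that is, when $I$ is a block of the partition of $N$ induced by $\sim_{x_N}$. Two elementary facts from Section~\ref{coincidence} are used throughout: (a) $d\mu_J$ is the image of the Lebesgue measure $d\x$ on $\X$ under the map $\x \mapsto (x_J = \x)$, so that $\int_{\X^J} g(x_{N\sm J}, x_J)\,d\mu_J = \int_\X g(x_{N\sm J}, x_J = \x)\,d\x$; and (b) the sets $B_\PP := C_\PP \sm \bigcup_{\PP' > \PP} C_{\PP'}$ form a partition of $\X^N$, the partition of $N$ induced by $\sim_{x_N}$ is exactly $\PP$ whenever $x_N \in B_\PP$, $d\mu^{\odot N}$ agrees with $d\nu_\PP$ on $B_\PP$, and $C_\PP \sm B_\PP$ is $\nu_\PP$-null (all of this appearing in the proof of the first Proposition of Section~\ref{coincidence}). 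With these in hand, points 1 and 4 are immediate, point 3 is short, and point 2 carries the work.

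Point 1 is the index-relabelling already used in Section~\ref{coincidence} to prove $f_\si(C_I) = C_{\si(I)}$: since $[f^{-1}_\si x_N](i) = x_N(\si(i))$, the equation $\1^N_I(f^{-1}_\si x_N) = 1$ says that $I$ is a block of the relation $\{(i,j) : x_N(\si(i)) = x_N(\si(j))\}$, and the change of variable $i \mapsto \si(i)$ turns this into the statement that $\si(I)$ is a block of $\sim_{x_N}$, i.e.\ $\1^N_{\si(I)}(x_N) = 1$. For point 4, assume $I \cap J = \emptyset$, so $J \subseteq N\sm I$. If $I$ is a block of $\sim_{x_N}$, then the partition induced by $\sim_{\pi_{N\sm I}x_N}$ is obtained from the one induced by $\sim_{x_N}$ by deleting the block $I$; hence, for a subset of $N\sm I$, being a block of $\sim_{\pi_{N\sm I}x_N}$ is the same as being a block of $\sim_{x_N}$. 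This gives $\1^N_I(x_N)\,\1^{N\sm I}_J(\pi_{N\sm I}x_N) = \1^N_I(x_N)\,\1^N_J(x_N)$, and the middle expression of point 4 is handled identically after interchanging $I$ and $J$. Since this is a pointwise identity of functions, no care about null sets is needed.

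Point 2 is proved case by case: I fix $x_{N\sm J}$, regard $\1^N_I(x_{N\sm J}, x_J = \x)$ as a function of $\x \in \X$, and integrate via (a). If $I = J$: for every $\x$ outside the finite set of values taken by the coordinates of $x_{N\sm J}$, the set $J$ is a block of the configuration $(x_{N\sm J}, x_J = \x)$, so $\1^N_J(x_{N\sm J}, x_J = \x) = 1$ for a.e.\ $\x$ and the integral equals $\int_\X h_N(x_{N\sm J}, x_J = \x)\,d\x = \int_{\X^J} h_N\,d\mu_J$. If $I \cap J = \emptyset$, so $I \subseteq N\sm J$: for a.e.\ $\x$ --- precisely, for $\x$ different from the common value of the $I$-coordinates when the latter are equal --- adding a cluster at $\x$ on the $J$-coordinates affects neither whether the $I$-coordinates coincide nor whether some coordinate indexed by $(N\sm J)\sm I$ coincides with them, whence $\1^N_I(x_{N\sm J}, x_J = \x) = \1^{N\sm J}_I(x_{N\sm J})$ for a.e.\ $\x$; pulling this constant out of the integral gives the stated formula. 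In the remaining case $I \cap J \neq \emptyset$ and $I \neq J$ I claim $\1^N_I(x_{N\sm J}, x_J = \x) = 0$ for a.e.\ $\x$: if $J \not\subseteq I$, then, as all $J$-coordinates carry the value $\x$ and $I \cap J \neq \emptyset$, the block of the configuration containing $I \cap J$ contains the whole of $J$, which is not a subset of $I$, so $I$ fails to be a block; if instead $J \subsetneq I$, then $I$ can be a block only when the (fixed) coordinates indexed by $I\sm J \neq \emptyset$ all equal $\x$, which leaves at most one admissible value of $\x$. In every sub-case the integral vanishes, as claimed.

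For point 3, fact (b) shows that on each $B_\PP$ the function $\1^N_I$ is constantly $1$ if $I$ is a block of $\PP$ (that is, $I \in \PP$) and constantly $0$ otherwise, so
\[
\int \1^N_I\, h_N\, d\mu^{\odot N} \;=\; \sum_{\PP : I \in \PP} \int_{B_\PP} h_N\, d\nu_\PP \;=\; \sum_{\PP : I \in \PP} \int_{C_\PP} h_N\, d\nu_\PP \;=\; \sum_{\PP : I \in \PP} \int h_N\, d\mu_\PP.
\]
The partitions of $N$ having $I$ as a block are in bijection with the partitions $\PP'$ of $N\sm I$ via $\PP = \{I\}\cup\PP'$, and $d\mu_\PP = d\mu_I\, d\mu_{\PP'}$; hence the last sum equals $\int h_N\, d\mu_I \sum_{\PP' \in \mk_{N\sm I}} d\mu_{\PP'} = \int h_N\, d\mu_I\, d\mu^{\odot(N\sm I)}$, which is point 3. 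The genuinely delicate part of the whole argument is the case analysis of point 2: in each case one must pin down the Lebesgue-null set of exceptional values of $\x$, and one must check that the three alternatives of the statement really exhaust every pair $(I,J)$ --- in particular, that $I \subsetneq J$ and the configuration in which $I$ and $J$ overlap with neither containing the other both fall under $J \not\subseteq I$ and are dealt with together.
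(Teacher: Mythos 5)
Your proof is correct and follows essentially the same route as the paper: the same pointwise reading of $\1^N_I$ as ``$I$ is a block of the partition induced by $x_N$,'' the same reduction of point 2 to an integral over $\x$ with a Lebesgue-null exceptional set in each of the same sub-cases (including the split of ``otherwise'' into $J \not\subseteq I$ and $J \subsetneq I$), and the same bijection between partitions of $N$ containing $I$ as a block and partitions of $N \sm I$. The only cosmetic differences are in point 3, where you pass through the $B_\PP$ stratification from the first proposition rather than applying point 2 blockwise to each $d\mu_\PP$, and in point 4, where your block-deletion argument replaces the paper's explicit case table.
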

\begin{proof} 
Proof 1.
If $X$ is a generic set, $\De \subseteq X$, $\1_\De$ is the characteristic function of $\De$ and $f:X \to X$ is a bijection, then $\1_\De \cdot f^{-1} = \1_{f(\De)}$. With a reasoning analogous to that in proposition 2, point 1, one can prove that $f_\si(D^N_I) = D^N_{\si(I)}$. So:
\[
\1_I^N \cdot f^{-1}_\si = \1_{D^N_I}\cdot f^{-1}_\si = \1_{f_\si(D^N_I)} = \1_{D^N_{\si(I)}} = \1_{\si(I)}^N.
\]

\begin{center} * * * \end{center}

Proof 2. 
\begin{align*}
& \int \1^N_I(x_{N \sm J}, x_J) h_N(x_{N \sm J}, x_J) d\mu_J = \\
& \int \1^N_I(x_{N \sm J}, x_J) h_N(x_{N \sm J}, x_J) \delta(\x_{j_1} - \x_{j_2}) \cdots \delta(\x_{j_{|J|-1}} - \x_{j_{|J|}}) dx_J = \\
& \int \1^N_I(x_{N \sm J}, x_J = \x) h_N(x_{N \sm J}, x_J = \x) d\x.
\end{align*}

For any generic finite set $K \subset \N$, let us introduce the function $F_K: \X^K \times \X \to \{0, 1\}$, defined as follows:
\[
F_K(x_K, \x) = \Big \la
\begin{array}{ll}
0 & \text{ if } \x= x_K(i) \text{ for some } i \in K;\\
1 & \text{ otherwise}.
\end{array}
\]
Of course, for any $x_K \in X^K$ and any integrable function $f: \X \to \C$, we have 
\[
\int F_K(x_K, \x) f(\x)d\x = \int f(\x)d\x.
\]
If $I \cap J = \emptyset$, then $\1^N_I(x_{N \sm J}, x_J=\x) = \1^{N \sm J}_I(x_{N \sm J}) F_I(\pi_I x_{N \sm J}, \x)$. So
\begin{align*}
& \int \1^N_I(x_{N \sm J}, x_J=\x) h_N(\x_{N \sm J}, x_J=\x) d\x = \\
& \1^{N \sm J}_I(x_{N \setminus J})  \int h_N(x_{N \setminus J}, x_J=\x) d\x = \1^{N \sm J}_I(x_{N \setminus J}) \int h_N(x_{N\sm J}, x_J) d\mu_J.
\end{align*}
If $I = J$, then $\1^N_I(x_{N \sm I}, x_I=\x) = F_{N \sm I} (x_{N \sm I}, \x)$, and therefore
\[
\int \1^N_I(x_{N \sm I}, x_I=\x) h_N(\x_{N \sm I}, x_I=\x) d\x = 
 \int h_N(x_{N \setminus I}, x_I=\x) d\x = \int h_N(x_{N \sm I}, x_I) d\mu_I.
\]
The third case splits into two cases: $J \cap (N\sm I) \neq \emptyset$ and $J \subset I$. In the first case $\1^N_I(x_{N \setminus J}, \x) = 0$; in the second case $\1^N_I(x_{N \setminus J}, \x) \leq 1 - F_{I\sm J}(x_{I\sm J}, \x)$, and therefore
\begin{align*}
& \left | \int \1^N_I(x_{N \sm J}, x_J=\x) h_N(\x_{N \sm J}, x_J=\x) d\x \right | \leq \\
&  \int \1^N_I(x_{N \sm J}, x_J=\x) |h_N(\x_{N \sm J}, x_J=\x)| d\x \leq \\
&  \int [1 - F_{I\sm J}(x_{I\sm J}, \x)] |h_N(\x_{N \sm J}, x_J=\x)| d\x = 0.
\end{align*}

\begin {center} *** \end{center}
Proof 3.
\[
\int 1^N_I(x_N) h_N(x_N) d\mu^{\odot N} = \sum_{\PP \in \mk_N} \int 1^N_I(x_N) h_N(x_N) d\mu_\PP.
\]
Let $\PP = \{I_1, \ldots, I_p \}$; then
\[
\int 1^N_I(x_N) h_N(x_N) d\mu_\PP = \int 1^N_I(x_N) h_N(x_N) d\mu_{I_1} \cdots d\mu_{I_p}. 
\]
From point 2 it is easy to realize that this integral survives only if $I = I_j$ for some $j$, in which case the function $\1^N_I$ disappears. So:
\[
\int 1^N_I(x_N) h_N(x_N) d\mu^{\odot N} = \sum_{\PP' \in \mk_{N\sm I}} \int h_N(x_N) d\mu_I d\mu_{\PP'} = \int h_N(x_N) d\mu_I d\mu^{\odot N\sm I}.
\]

\begin {center} *** \end{center}

Proof 4. The table below shows, for any possible situation of a pair of indexes $i,j \in N$, the condition that the values $x_N(i), x_N(j)$ must satisfy ($=, \ne,$ none) in order the expression at the beginning of the line be equal $1$. The notation $K := N \sm (I \cup J) $ has been adopted.
\begin{small}
\begin{center}
\begin{tabular}{|l|c|c|c|c|c|c|}
\hline
&$ i, j \in I  $&$  i \in I, j \in J $&$ i \in I, j \in K $&$ i,j \in J $&$ i \in J, j \in K $&$ i, j \in K $ \\ \hline
$\1^N_I(x_N) \1^{N \sm I}_J(\pi_{N \sm I} x_N)  $&$= $&$ \neq $&$\neq $&$ = $&$ \ne $& none \\ \hline
$\1^{N\sm J}_I(\pi_{N \sm J} x_N) \1^N_J(x_N)  $&$= $&$ \neq $&$\neq $&$ = $&$ \ne $& none \\ \hline
$\1^N_I(x_N) \1^N_J(x_N) $&$= $&$ \neq $&$\neq $&$ = $&$ \ne $& none \\ \hline
\end{tabular}
\end{center}
\end{small}
As we see, the three expressions have the same conditions, so any $x_N$ gives the same value for the three expressions.
\end{proof}

\subsection{Proof of proposition 3, section \ref{properties}}
In this proof it will always be: $|N|=n, |M|=m, |K|= k, |H|=h$.

\begin{proof} Proof 1. Without loss of generality, we can assume that $N \cap K = \emptyset$, and  $M = N \cup K$. Let us prove first that $\phi_+^k(\x) = S \psi_+^k(\x)$, where $\psi_+^k(\x)$ acts from $\HH_s^{\odot N} \to \HH^{\odot M}$ as follows:
\[
[\psi^k_+(\x) \Psi_N](x_M) :=\sqrt{\frac{(n+k)!}{n!}} \de(\x - \x_{k_1}) \1^M_K(x_M) \Psi_N(\pi_N x_M).
\]
We have:
\begin{align*}
& [S\psi^k_+(\x) \Psi_N](x_M) :=\frac{1}{(n+k)!}\sum_{\si \in \Sigma_M} [\psi^k_+(\x) \Psi_N](f^{-1}_\si(x_M)) = \\
& \frac{1}{(n+k)!}\sum_{\si \in \Sigma_M}  \sqrt{\frac{(n+k)!}{n!}} \de(\x - f^{-1}_\si(\x_{k_1})) \1^M_K(f^{-1}_\si(x_M)) \Psi_N(\pi_{M \sm K} f^{-1}_\si(x_M)).
\end{align*}
Note that 
\[
\pi_{M \sm K} f^{-1}_\si (x_M)  = x_M \cdot \si \big |_{M \sm K} = x_M \big |_{\si(M \sm K)} = 
x_M \big |_{M \sm \si(K)} = \pi_{M \sm \si(K)} x_M.
\]
So, the previous calculation continues as follows:
\begin{align*}
& \ldots = \sqrt{\frac{1}{(n+k)! n!}} \sum_{\si \in \Sigma_M}   \de(\x - \x_{\si(k_1)}) \1^M_{\si(K)}(x_M) \Psi_N(\pi_{M \sm \si(K)}x_M).
\end{align*}
Note that, if $\si_1(K) = \si_2(K)$, the corresponding addends of the sum are equal. As a consequence, instead of summing over the permutations of $\Sigma_M$ one can sum over the subsets of $M$ with $k$ elements, and multiply the sum by the number of permutations of $M$ leaving a subset of $k$ elements unchanged, which is $k! n!$. In conclusion:
\[
[\phi^k_+(\x) \Psi_N](x_M) = k! \sqrt{\frac{n!}{(n+k)!}} \sum_{I \subseteq M: |I| = k} \de(\x - \x_{i_1}) \1^M_I(x_M) \Psi_N(\pi_{M \sm I} x_M).
\]

Let us prove now that $\la \phi^k_+(\x)\Phi_N| \Psi_M \ra= \la \Phi_N| \phi^k_-(\x) \Psi_M \ra$. We have:
\begin{align*}
& \la \phi^k_+(\x)\Phi_N| \Psi_M \ra = \la S \psi^k_+(\x)\Phi_N| \Psi_M \ra = \la \psi^k_+(\x)\Phi_N| S \Psi_M \ra = \la \psi^k_+(\x)\Phi_N| \Psi_M \ra = \\
& \int [\psi^k_+(\x)\Phi_N]^*(x_N, x_K) \Psi_M(x_N, x_K) d\mu^{\odot M} = \\
& \int \sqrt{\frac{(n+k)!}{n!}} \de(\x - \x_{k_1}) \1^M_K(x_N, x_K) \Phi_N^*[\pi_N(x_N, x_K)] \Psi_M(x_N, x_K) d\mu^{\odot M} = \\
& \sqrt{\frac{(n+k)!}{n!}} \int \de(\x - \x_{k_1}) \Phi_N^*(x_N) \Psi_M(x_N, x_K) d\mu_K d\mu^{\odot N} = 
\la \Phi_N |\phi^k_-(\x) \Psi_M \ra.
\end{align*}
For $\Phi= (\Phi_0, \Phi_1, \ldots )$ and $\Psi= (\Psi_0, \Psi_1, \ldots )$ belonging to $\GG_s(\HH)$, we have:
\[
\la \phi^k_+(f)\Phi| \Psi \ra = \sum_{n=0}^\infty \la \phi^k_+(f)\Phi_n| \Psi_{n+k} \ra = 
\sum_{n=0}^\infty \la \Phi_n| \phi^k_-(f)\Psi_{n+k} \ra = \la \Phi| \phi^k_-(f)\Psi \ra. 
\]

\begin{center} * * * \end{center}
Proof of 2. In order to prove that $\phi^k_\pm(f)$ are defined on $\GG_0$ It is sufficient to prove that $\phi^k_\pm(f)$, restricted to $\HH^{\odot N}$, are bounded. 

Let us prove that $\phi^k_-(f):\HH^{\odot N} \to \HH^{\odot M}$  is bounded, where $M = N \sm K$, $K \subseteq N$, and $|K |= k$. Note first of all that, for $\Psi_N \in \HH^{\odot N}$, we have:
\begin{align*}
& \infty > ||\Psi_N ||^2 = \int |\Psi_N(x_M, x_K)|^2 d\mu^{\odot N} \geq \int 1^M_K |\Psi_N(x_M, x_K)|^2 d\mu^{\odot N} = \\
& = \int |\Psi_N(x_M, x_K)|^2 d\mu_K d\mu^{\odot M} = \int |\Psi_N(x_M, x_K=\x )|^2 d\x d\mu^{\odot M}.
\end{align*}
This implies that $\int |\Psi_N(x_M, x_K=\x )|^2 d\x < \infty$. So $g(\x):= \Psi_N(x_M, x_K=\x ) \in \HH$, and, if $f \in \HH$, we have:
\begin{align*}
& \int f^*(\y) \Psi_N(x_M, y_K= \y ) f(\x) \Psi^*_N(x_M, x_K=\x ) d\x d\y = \la f |g\ra^2 \leq  \\
&  \leq ||f||^2||g||^2 = ||f||^2 \int |\Psi_N(x_M, x_K = \x)|^2 d\x.
\end{align*}
As a consequence:
\begin{align*}
& ||\phi^k_-(f)\Psi_N||^2 = \frac{n!}{(n - k)!} \int f(y_{k_1})\Psi^*_N(x_M, y_K ) f^*(\x_1) \Psi_N(x_M, x_K) d\mu_K(y_K) d\mu_K(x_K) d\mu^{\odot M} = \\
& = \frac{n!}{(n - k)!}\int f(\y) \Psi^*_N(x_M, y_K = \y ) f^*(\x) \Psi_N(x_M, x_K = \x ) d\x d\y d\mu^{\odot M} \leq \\
&  \leq \frac{n!}{(n - k)!} ||f||^2 \int |\Psi_N(x_M, x_K = \x)|^2 d\x d\mu^{\odot M} = \frac{n!}{(n - k)!} ||f||^2 \int |\Psi_N(x_M, x_K)|^2 d\mu_K d\mu^{\odot M} = \\
& = \frac{n!}{(n - k)!} ||f||^2 \int 1^N_K(x_M, x_K) |\Psi_N(x_M, x_K)|^2 d\mu^{\odot N} \leq  \frac{n!}{(n - k)!} ||f||^2 \int |\Psi_N(x_N)|^2 d\mu^{\odot N} =\\
& = \frac{n!}{(n - k)!}||f||^2 ||\Psi_N||^2. 
\end{align*}

\vspace{3mm}
Let us prove that $\phi^k_+(f):\HH^{\odot N} \to \HH^{\odot M}$ is bounded, where we assume now $M = N \cup K$ and $K \cap N = \emptyset$.
\begin{align*}
& || \phi^k_+(f)\Psi_N||^2 = || S\psi^k_+(f)\Psi_N||^2 \leq ||\psi^k_+(f)\Psi_N||^2 = \\
& = \frac{(n+k)!}{n!} \int \1^M_K(x_M) |f(\x_{k_1})|^2 |\Psi_N(\pi_N x_M)|^2 d\mu^{\odot M} = \\
= & \frac{(n+k)!}{n!} \int |f(\x_{k_1})|^2 |\Psi_N(\pi_N x_M)|^2 d\mu_K d\mu^{\odot N} = \\
& = \frac{(n+k)!}{n!} ||f||^2 ||\Psi_N||^2.
\end{align*}

Let us prove that $\phi^k(f)=\phi^k_-(f)+ \phi^k_+(f)$ is symmetric on $\GG_0$. It is sufficient to prove that $\la \phi^k_+(f) \Phi |\Psi \ra = \la \Phi |\phi^k_-(f) \Psi \ra$ for $\Phi, \Psi \in \GG_0$. We have:
\[
\la \phi^k_+(f) \Phi |\Psi \ra = \int f^*(\x) \la \phi^k_+(\x) \Phi |\Psi \ra d\x  = 
 \int f^*(\x) \la \Phi |\phi^k_-(\x) \Psi \ra d\x = \la \Phi |\phi^k_-(f) \Psi \ra.
\]

\begin{center} * * * \end{center}

Proof of 3. It is sufficient to prove the commutation relations for a generic vector $\Psi_N \in \HH^{\odot N}$, $n \geq 0$.

Proof that 
\begin{equation} \label{recr}
[\phi^k_-(f), \phi^h_+(g)] \Psi_N = \delta_{kh} h! \la f| g \ra \Psi_N.
\end{equation}
Define $m:= n + h -k$. If $m < 0$ then the commutator is null; in this case (\ref{recr}) is satisfied because $m <0$ implies $h \neq k$. For $m \geq 0$, define $M:=\{1, \ldots, m\}$ if $m >0$, and $M:=\emptyset$ if $m=0$; assume moreover $K \cap M = \emptyset$. Let us calculate the first term $\phi^h_+(g) \phi^k_-(f) \Psi_N$:
\begin{align*}
& [\phi^h_+(g) \phi^k_-(f) \Psi_N] (x_M) = \{\phi^h_+(g) [\phi^k_-(f) \Psi_N] \} (x_M) \\
& = h! \sqrt{\frac{(n-k)!}{m!}} \sum_{J \subseteq M: |J|= h} g(\x_{J_1}) \1^M_J(x_M) [\phi^k_-(f) \Psi_N] (\pi_{M \sm J} x_M) = \\
& = h! \sqrt{\frac{(n-k)!}{m!}}\sqrt{\frac{n!}{(n-k)!}} \sum_{J \subseteq M: |J|= h} g(\x_{j_1}) \1^M_J(x_M) \int f^*(\x_{k_1}) \Psi_N(\pi_{M \sm J} x_M, x_K) d\mu_K= \\ 
& = h! \sqrt{\frac{n!}{m!}} \sum_{J \subseteq M: |J|= h} g(\x_{j_1}) \1^M_J(x_M) \int f^*(\x_{k_1}) \Psi_N(\pi_{M \sm J} x_M, x_K) d\mu_K.
\end{align*}
Note that if $n < k$, then $h >m$, and the above sum has no addends, because there is no set $J \subseteq M$. This fact corresponds to the fact that, in this case, $\phi^h_+(g) \phi^k_-(f) \Psi_N = 0$.

Let us calculate now the second term $\phi^k_-(f)\phi^h_+(g) \Psi_N$.
\begin{align*}
& \{\phi^k_-(f) [\phi^h_+(g) \Psi_N]\} (x_M) = \sqrt{\frac{(n+h)!}{m!}} \int f^*(\x_{k_1}) [\phi^h_+ (g) \Psi_N] (x_M, x_K) d\mu_K = \\
& = \sqrt{\frac{(n+h)!}{m!}} \int f^*(\x_{k_1}) h! \sqrt{\frac{n!}{(n+h)!}} \sum_{J \subseteq M \cup K: |J| = h} g(\x_{j_1}) \1^{M \cup K}_J (x_M, x_K) \Psi_N [\pi_{(M \cup K) \sm J} (x_M, x_K)] d\mu_K= \\
& = h! \sqrt{\frac{n!}{m!}} \sum_{J \subseteq M \cup K: |J| = h} \int \1^{M \cup K}_J (x_M, x_K) f^*(\x_{k_1})  g(\x_{j_1}) \Psi_N [\pi_{(M \cup K) \sm J}(x_M, x_K)] d\mu_K = \ldots
\end{align*}
Now we apply the point 2 of proposition 8 (section \ref{app1}), according to which the sets $J$ which contribute to the sum are those contained in $M$ and, only in the case in which $h=k$, the set $J=K$. Note that in the latter case, also $M =\{1, \ldots, n\}$. We obtain therefore:
\begin{align*}
& \ldots = h! \sqrt{\frac{n!}{m!}} \sum_{J \subseteq M: |J| = h} \1^{M}_J(x_M) \int f^*(\x_{k_1})  g(\x_{j_1})  \Psi_N (\pi_{M \sm J}x_M, x_K) d\mu_K + \\
& + \delta_{kh} h! \int f^*(\x_{k_1})  g(\x_{k_1}) \Psi_N[\pi_M(x_M, x_K)] d\mu_K = \\
& = [\phi^h_+(g) \phi^k_-(f) \Psi_N] (x_M) + \delta_{kh} h! \Psi_N(x_M) \int f^*(\x_{k_1})  g(\x_{k_1}) d\mu_K = \\
& = [\phi^h_+(g) \phi^k_-(f) \Psi_N] (x_M) + \delta_{kh} h! \Psi_N(x_M) \int f^*(\x) g(\x) d\x = \\
& = [\phi^h_+(g) \phi^k_-(f) \Psi_N] (x_M) + \delta_{kh} h! \la f| g \ra \Psi_N(x_M),
\end{align*}
which proves the thesis.

\vspace{3mm}
Poof that:
\[
[\phi^k_-(f), \phi^h_-(g)] \Psi_N = 0
\]
If $k+h > n$ it is obvious. Else, let $M:=\{1, \ldots, n-h-k\}$, and let $H,K$ such that $M, H, K$ are mutually disjoint. Then:
\begin{align*}
& \{\phi^k_-(f)[\phi^h_-(g) \Psi_N]\}(x_M) = \sqrt{\frac{(n-h)!}{(n-h-k)!}} \int f^*(\x_{k_1})[\phi^h_-(g) \Psi_N](x_M, x_K) d\mu_K = \\
& \sqrt{\frac{(n-h)!}{(n-h-k)!}} \int f^*(\x_{k_1}) \sqrt{\frac{n!}{(n-h)!}} \int g^*(\x_{h_1}) \Psi_N(x_M, x_K, x_H) d\mu_H d\mu_K= \\
& \sqrt{\frac{n!}{(n-h-k)!}} \int f^*(\x_{k_1}) g^*(\x_{h_1}) \Psi_N(x_M, x_K, x_H) d\mu_H d\mu_K.
\end{align*}
Since this result is symmetric in $h$ and $k$, it is necessary equal to the opposite term.

\vspace{3mm}
Calculation of 
\[
[\phi^k_+(f), \phi^h_+(g)] = 0.
\]
Let $M= \{1, \ldots, n+h +k\}$.
\begin{align*}
& \{\phi^h_+(g) [\phi^k_+(f) \Psi_N]\} (x_M) = 
h! \sqrt{\frac{(n+k)!}{(n+k+h)!}} \sum_{\{I \subseteq M: |I|= h\}} g(\x_{i_1}) \1^M_I(x_M) [\phi^k_+(f) \Psi_N] (\pi_{M \sm I} x_M) = \\
& = h! \sqrt{\frac{(n+k)!}{(n+k+h)!}} \sum_{\{I \subseteq M: |I|= h\}} g(\x_{i_1}) \1^M_I(x_M) \\
& k! \sqrt{\frac{n!}{(n+k)!}} \sum_{\{J \subseteq M \sm I: |J|= k\}} f(\x_{j_1}) \1^{M \sm I}_J(\pi_{M \sm I} x_M) \Psi_N(\pi_{M \sm (I \cup J)} x_M) = \\
& = h! k! \sqrt{\frac{n!}{(n+k+h)!}} \sum_{\{I \subseteq M: |I|= h\}}  \sum_{\{J \subseteq M \sm I: |J|= k\}}g(\x_{i_1}) f(\x_{j_1})\1^M_I(x_M)  \1^M_J(x_M) \Psi_N(\pi_{M \sm (I \cup J)} x_M) = \\
& h! k! \sqrt{\frac{n!}{(n+k+h)!}} \sum_{\{I, J \subseteq M: |I|= h, |J|= k, I \cap J = \emptyset\}}g(\x_{i_1}) f(\x_{j_1})\1^M_I(x_M)  \1^M_J(x_M) \Psi_N(\pi_{M \sm (I \cup J)} x_M).
\end{align*}
Since the expression is symmetric in $i$ and $j$, it is certainly equal to the opposite expression $[\phi^k_+(f) \phi^h_+(g) \Psi_N] (x_M)$.

\begin{center} * * * \end{center}

Proof of 4. Let us evaluate the explicit form of $H^h_k:\HH_s^{\odot N} \to \HH_s^{\odot M}$, where we assume now that $M = (N \sm K) \cup H$, with $K \subseteq N$ and $H \cap (N \sm K) =\emptyset$ (as usual, for $k >n$ the operator is null). Note that $M \sm H = N \sm K$ and $m = n-k+h$. Let us evaluate first $G^h_k:\HH_s^{\odot N} \to \HH^{\odot M}$, where $G^h_k := \int \psi^h_+(\x)\phi^k_-(\x) d\x$. We have:

\begin{align*}
& [G^h_k \Psi_N](x_M) = \int \{\psi^h_+(\x)[\phi_-^k(\x) \Psi_N]\} (x_M) d\x = \\
& \sqrt{\frac{m!}{(n-k)!}} \int \delta(\x- \x_{h_1}) \1^M_H(x_M) [\phi_-^k(\x) \Psi_N](\pi_{M \sm H} x_M) d\x = \\
& \sqrt{\frac{m!}{(n-k)!}} \sqrt{\frac{n!}{(n-k)!}}\int \delta(\x- \x_{h_1}) \1^M_H(x_M) \Psi_N(\pi_{N \sm K} x_M, x_K= \x) d\x = \\
& = A \, \1^M_H(x_M)  \Psi_N(\pi_{N \sm K} x_M, x_K = \x_{h_1}),
\end{align*}
where $A := \sqrt{n! m!}/(n-k)!$. So:
\begin{align*}
& [H^h_k \Psi_N](x_M) = [S G^h_k \Psi_N](x_M) = \frac{1}{m!} \sum_{\si \in \Si_M} U_\si [G^h_k \Psi_N](x_M) = \\
& B \sum_{\si \in \Si_M} \1^M_H(f_\si^{-1}(x_M))  \Psi_N(\pi_{M \sm H} f_\si^{-1}(x_M), x_K = f_\si^{-1}(\x_{h_1})) = \\
& B \sum_{\si \in \Si_M} \1^M_{\si(H)}(x_M)  \Psi_N(\pi_{M \sm \si(H)} x_M, x_K = \x_{\si(h_1)}),
\end{align*}
where $B = \sqrt{n!}/\sqrt{m!}(n-k)!$. Again we can transform the sum over the permutation into the sum over the set $H \subset M$ and multiply for $h! (m-h)! = h! (n-k)!$. In conclusion:
\[
[H^h_k \Psi_N](x_M) = h! \sqrt{\frac{n!}{(n-k+h)!}} \sum_{H \subseteq M: |H|=h} \1^M_H(x_M)  \Psi_N(\pi_{M \sm H} x_M, x_K = \x_{h_1}).
\]
In order to prove that $H^h_k$ is defined on $\GG_0$ it is sufficient to prove that $K^h_k$, restrictred to $\HH^{\odot N}_s$, is a bounded operator. In fact:
\begin{align*}
& ||G^h_k \Psi_N||^2 = A^2 \int \1^M_H(x_M)  |\Psi_N(\pi_{N \sm K} x_M, x_K = \x_{h_1})|^2 d\mu^{\odot M} = \\
& A^2 \int |\Psi_N(\pi_{M \sm H} x_M, x_K = \x_{h_1})|^2 d\mu_H d\mu^{M\sm H} =
A^2 \int |\Psi_N(\pi_{N \sm K} x_M, x_K = \x)|^2 d\x d\mu^{\odot N\sm K} = \\
& A^2 \int |\Psi_N(\pi_{N \sm K} x_M, x_K)|^2 d\mu_K d\mu^{\odot N\sm K} = A^2 \int |\Psi_N(x_N)|^2 d\mu_K d\mu^{\odot N\sm K} = \\
& = A^2 \int \1^N_K(x_N) |\Psi_N(x_N)|^2 d\mu^{\odot N} \leq A^2 \int |\Psi_N(x_N)|^2 d\mu^{\odot N} = A^2 ||\Psi_N||^2.
\end{align*}
In order to prove that $H^h_k + H^k_h$ is symmetric on $\GG_0$, it is sufficient to prove that $\la  G^h_k\Phi_N | \Psi_M \ra = \la \Phi_N | G^k_h \Psi_M \ra$. We have:
\begin{align*}
& \la  G^h_k\Phi_N | \Psi_M \ra  = A \, \int \1^M_H(x_M)  \Phi^*_N(\pi_{N \sm K} x_M, x_K = \x_{h_1}) \Psi_M(x_M) d\mu^{\odot M}= \\
& = A \, \int \Phi^*_N(\pi_{N \sm K} x_M, x_K = \x_{h_1}) \Psi_M(x_M) d\mu_H d\mu^{\odot M \sm H}= \\
& = A \, \int \Phi^*_N(\pi_{N \sm K} x_M, x_K = \x) \Psi_M(\pi_{M \sm H} x_M, x_H = \x) d\x d\mu^{\odot M \sm H}=\\
& = A \, \int \Phi^*_N(\pi_{N \sm K} x_M, x_K ) \Psi_M(\pi_{M \sm H} x_M, x_H = \x_{k_1}) d\mu_K d\mu^{\odot N \sm K}=\\
& = A \, \int \Phi^*_N(x_N) 1^N_K(x_N) \Psi_M(\pi_{M \sm H} x_M, x_H = \x_{k_1}) d\mu^{\odot N}= \la  \Phi_N | G^k_h \Psi_M \ra.
\end{align*}
\end{proof}


\begin{thebibliography}{11}
\bibitem{wein} S. Weinberg: {\it The Quantum theory of fields I: Foundations}, Cambridge University Press, Cambridge, 1995.
\bibitem{der} J. Derezi\'nski: {\it Large Time Behavior of Classical N-body Systems }, Commun. Math. Phys. {\bf 148}, 503-520 (1992).
\bibitem{reed} M. Reed, B. Simon: {\it Method of modern mathematical physics II: Fourier analysis, self-adjointness}, Academic Press, New York (1972).
\end{thebibliography}
\end{document}